\newcommand{\journalonly}[1]{}
\newcommand{\arxivonly}[1]{#1}
\newcommand{\union}{\cup}
\newcommand{\intersect}{\cap}
\newcommand{\N}{\mathds{N}}
\providecommand{\E}{\mathds{E}}\renewcommand{\E}{\mathds{E}}
\theoremstyle{plain}
\newtheorem{theorem}{Theorem}
\newtheorem{lemma}[theorem]{Lemma}
\theoremstyle{definition}
\newtheorem{definition}{Definition}
\newcommand{\D}{\mathcal{D}}
\newcommand{\given}{\,\middle|\,}
\title{Learning in the Repeated Secretary Problem}
\author{Daniel G.\ Goldstein\\Microsoft Research \and
R.\ Preston McAfee\\Microsoft \and
\and Siddharth Suri\\Microsoft Research \and James R.\ Wright\\Microsoft Research}
\begin{document}

\maketitle
\begin{abstract}
In the classical secretary problem, one attempts to find the maximum of an unknown and unlearnable distribution through sequential search.  
In many real-world searches, however, distributions are not entirely unknown and can be learned through experience.
To investigate learning in such a repeated secretary problem we conduct a
large-scale behavioral experiment in which people search repeatedly from fixed distributions.  
In contrast to prior investigations that find no evidence for learning in the classical scenario, in the repeated setting we observe substantial learning resulting in near-optimal stopping behavior.  
We conduct a Bayesian comparison of multiple behavioral
models which shows that participants' behavior is best described by a class of
threshold-based models that contains the theoretically optimal strategy.   
Fitting such a threshold-based model to data reveals players' estimated
thresholds to be surprisingly close to the optimal thresholds after only a
small number of games.    
\end{abstract}

\paragraph{Keywords:}
Bayesian model comparison; experiments; human behavior; learning; secretary problem

\section{Introduction} 

In the \emph{secretary problem},  an agent evaluates candidates one at a time in search
of the best one, making an accept or reject decision after each evaluation.
Only one candidate can be accepted, and once rejected, a candidate can never be
recalled. It is called the secretary problem because it resembles a hiring process in
which candidates are interviewed serially and, if rejected by one employer, are
quickly hired by another.

Since its appearance in the mid-twentieth century, the secretary problem 
has enjoyed exceptional popularity \citep{freeman-secretary}. 
It is the prototypical optimal stopping problem, attracting
so much interest 
from so many fields that one review article concluded that it ``constitutes its own `field' of study'' \citep{ferguson-secretary}.
\journalonly{
In this century, analyses, extensions, and tests of the secretary problem have appeared in decision science, operations research, computer science, 
economics, statistics, psychology as well as in the pages of this journal \citep{alpern-2017,palley-2014,bearden-2006b}.
}%
\arxivonly{
In this century, analyses, extensions, and tests of the secretary problem have appeared in decision science, computer science, 
economics, statistics, psychology, and operations research \citep{alpern-2017,palley-2014,bearden-2006b}.
}

The intense academic interest in the secretary problem may have to do with its
similarity to real-life search problems such as choosing a mate
\citep{todd-secretary}, choosing an apartment \citep{zwick-secretary} or
hiring, for example, a secretary. It may have to do with the way the problem
exemplifies the concerns of core branches of economics and operations research
that deal with search costs. Lastly, the secretary problem may have endured
because of curiosity about its fascinating solution. In the classic version of
the problem, the optimal strategy is to ascertain the maximum of the first
$1/e$ boxes and then stop after the next box that exceeds it. Interestingly,
this $1/e$ stopping rule wins about $1/e$ of the time in the limit
\citep{gilbert-secretary}. The curious solution to the secretary problem only holds when the values
in the boxes are drawn from an unknown distribution. To make this point clear,
in some empirical studies of the problem, participants only get to learn the
rankings of the boxes instead of the values \citep[e.g.,][]{seale-secretary}. 
But is it realistic to assume that people cannot learn about the distributions in 
which they are searching?

In many real-world searches, people can learn about the distribution of the quality of candidates as
they search. The first time a manager hires someone, she may have only a vague
guess as to the quality of the candidates that will come 
through the door. By the fiftieth hire, however, she'll have hundreds of
interviews behind her and know the distribution rather well. This should cause
her accuracy in a real-life secretary problem to increase with experience. 

While people seemingly \textit{should} be able improve at the secretary problem with
experience, surprisingly, prior academic research does not find evidence that they do. 
For example, \citet{campbell-secretary}
attempted to get participants to learn by offering enriched feedback and even
financial rewards in a repeated secretary problem, but concluded ``there is no
evidence people learn to perform better in any condition''. Similarly,
\citet{lee-secretary} found no evidence of learning, nor did
\citet{seale-secretary}.

In contrast, by way of a randomized experiment with thousands of players, we
find that performance improves dramatically over a few trials and soon
approaches optimal levels. 
We will show that players steadily increase their probability of
winning the game with more experience, eventually coming close to  
the optimal win rate.
Then we show that the improved win rates are due to players learning to
make better decisions on a
box-by-box basis and not just due to aggregating over boxes.
Furthermore we will show that the learning we observe occurs in a noisy environment
where the feedback they get, i.e. win or lose, may be unhelpful.
After showing various types of learning in our data we turn our attention
to modeling the players behavior.  
Using a Bayesian comparison framework we show that
players' behavior is best described by a family of threshold-based models which include the optimal strategy.
Moreover, the estimated
thresholds are surprisingly close to the optimal thresholds after only a small number of games. 

\section{Related Work}

While the total number of articles on the secretary problem is large
\cite{freeman-secretary}, our concern with empirical, as opposed to purely theoretical, investigations  reduces these to a much smaller set. We discuss here the most similar to our investigation. \citet{ferguson-secretary} usefully defines a ``standard'' version of the secretary problem as follows:
\begin{quote}
1. There is one secretarial position available.\\
2. The number $n$ of applicants is known. \\
3. The applicants are interviewed sequentially in random order, each order being equally likely.\\
4. It is assumed that you can rank all the applicants from best to worst without ties. The decision to accept or reject an applicant must be based \emph{only} on the relative ranks of those applicants interviewed so far.\\
5. An applicant once rejected cannot later be recalled.\\
6. You are very particular and will be satisfied with nothing but the very best.\\
\end{quote}

The one point on which we deviated from the standard problem is the
fourth. To follow this fourth assumption strictly, instead of presenting people
with raw quality values, some authors \citep[e.g.,][]{seale-secretary} present
only the ranks of the candidates, updating the ranks each time a new candidate
is inspected. This prevents people from learning about the distribution.
However, because the purpose of this work is to test for improvement when
distributions are learnable, we presented participants with actual values
instead of ranks.

Others properties of the classical secretary problem could have been changed.
For example, there exist alternate versions in which there is a payout for choosing candidates
other than the best. These ``cardinal'' and ``rank-dependent'' payoff
variants \citep{bearden-2006} violate the sixth property above.
We performed a literature search and found fewer than 100 papers on these
variants, while finding over 2,000 papers on the standard variant.
Our design preserves the sixth property for two reasons.
First, by preserving it, our results will be directly comparable to the
the greatest number of existing theoretical and empirical
analyses. 
Second, changing more than one variable at a time is undesirable because it 
makes it difficult to identify which variable change is responsible for
changes in outcomes.

While prior investigations, listed next, have looked at people's performance on
the secretary problem, none have exactly isolated the condition of making the
distributions learnable. Across several articles, Lee and colleagues
\citep{lee-secretary, campbell-secretary, lee-oconnor-secretary} conducted
experiments in which participants were shown values one at a time and were told
to try to stop at the maximum. Across these papers, the number of candidates
(or boxes or secretaries) ranged from 5 to 50 and participants played from 40
to 120 times each. In all these studies, participants knew that the values were
drawn from a uniform distribution between 0 and 100. For instance,
\citet{lee-secretary} states, ``It was emphasized that ...  the values were
uniformly and randomly distributed between 0.00 and 100.00''. With such an
instruction, players can immediately and exactly infer the percentiles of the
values presented to them, which helps them calculate the probability that
unexplored values may exceed what they have seen. 
As participants were told about the distribution, these experiments do not
involve learning distribution from experience, which is our concern.
Information about the
distribution was also conveyed to participants in a study by
\citet{rapoport-secretary-1970}, in which seven individual participants viewed
an impressive 15,600 draws from probability distributions over several weeks
before playing secretary problem games with values drawn from the same
distributions. These investigations are similar to ours in that they both
involve repeated play and that they present players with actual values instead
of ranks. That is, they depart from the fourth feature of the standard
secretary problem listed above. These studies, however, differ from ours in that
they give participants information about the distribution from which the values
are drawn before they begin to play. In contrast, in our version of the game,
participants are told no information about the distribution, see no samples
from it before playing, and do not know what the minimum or maximum values
could be. This key difference between the settings may have had a great impact.
For instance, in the studies by Lee and colleagues, the authors did not find
evidence of learning or players becoming better with experience. In contrast,
we find profound learning and improvement with repeated play.

\citet{corbin-secretary} ran an experiment in which people played repeated
secretary problems, with a key difference that these authors manipulated the
values presented to subjects with each trial. For instance, the authors varied
the support of the distribution from which values were drawn, and manipulated
the ratio and ranking of early values relative to later ones. The manipulations
were done in an attempt to prevent participants from learning about the
distribution and thus make each trial like the ``standard'' secretary problem
with an unknown distribution. Similarly, \citet{palley-2014} provide participants
with ranks for all but the selected option to hinder learning about the 
distribution. In contrast, because our objective is to
investigate learning, we draw random numbers without any manipulation.

Finally, in a study by \citet{kahan-decision}, groups of 22 participants
were shown up to 200 numbers chosen from either a left skewed, right skewed
or uniform distribution.  In this study, as well as ours, participants were
presented with actual values instead of ranks. Also
like our study, distributions of varying skew were used as
stimuli. However, in \citet{kahan-decision}, participants played the game
just one time and thus were not able to learn about the distribution
to improve at the game.  

In summation, for various reasons, prior empirical investigations of the
secretary problem have not been designed to study learning about the
distribution of values. These studies either informed participants about the
parameters of the distribution before the experiment, allowed participants
to sample from the distribution before the experiment, replaced values from
the distribution with ranks, manipulated values to prevent learning, or ran
single-shot games in which the effects of learning could not be applied to
future games. Our investigation concerns a repeated secretary problem in
which players can observe values drawn from distributions that are held 
constant for each player from game to game.

\section{Experimental setup}

To collect behavioral data on the repeated secretary problem with learnable
distributions of values, we created an online experiment. The experiment was
promoted as a contest on several web logs and attracted 6,537 players who
played the game at least one time. A total of 48,336 games were played on the
site. As users arrived at the game's landing page, they were cookied and their
browser URL was automatically modified to include an identifier. These two
steps were taken to assign all plays on the same browser to the same user id
and condition, and to track person-to-person sharing of the game URL. Any user
determined to arrive at the site via a shared URL (i.e., a non-cookied user
entering via a modified URL) was excluded from analysis and is not counted in
the 6,537 we analyze. We note that including these users makes little
difference to our results and that we only exclude them to obtain a set of
players that were randomly assigned to conditions by the website. Users saw the
following instructions. Blanks stand in the place of the number of boxes, which
was randomly assigned and will be described later. 

\begin{quote}
You have been captured by an evil dictator. 
He forces you to play a game. 
There are \underline{\hspace{.5cm}} boxes. 
Each box has a different amount of money in it. 
You can open any number of boxes in any order. 
After opening each box, you can decide to open another box or you can stop by clicking the stop sign. 
If you hit stop right after opening the box with the most money in it (of the \underline{\hspace{.5cm}} boxes), then you win. 
However, if you hit stop at any other time, you lose and the evil dictator will kill you. 
Try playing a few times and see if you improve with practice.
\end{quote}

The secretary problem was lightly disguised as the ``evil dictator game'' to
someone lessen the chances that a respondent would search for the problem online
and discover the classical solution. 
Immediately beneath the instructions was an icon of a traffic stop sign and the
message ``When you are done opening boxes, click here to find out if you win''.
Beneath this on the page were hyperlinks stating ``Click here to open the first
box'',``Click here to open the second box'', and so on. As each link was
clicked, an AJAX call retrieved a box value from the server, recorded it in a
database and presented it to the user. If the value in the box was the highest
seen thus far, it was marked as such on the screen.  See Figure
\ref{fig:screenshot} in the Appendix for a screenshot. Every click and box
value was recorded, providing a record of every box value seen by every player,
as well as every stopping point. If a participant tried to stop at a box that
was dominated by (i.e., less than) an already opened box, a pop-up explained
that doing so would necessarily result in the player losing. After clicking on
the stop icon or reaching the last box in the sequence, participants were
redirected to a page that told them whether they won or lost, and showed them
the contents of all the boxes, where they stopped, where the maximum value was,
and by how many dollars (if any) they were short of the maximum value. To
increase the amount of data submitted per person, players were told ``Please
play at least six times so we can calculate your stats''. 

\begin{figure}[t]
\centering
\includegraphics[width=0.5\textwidth]{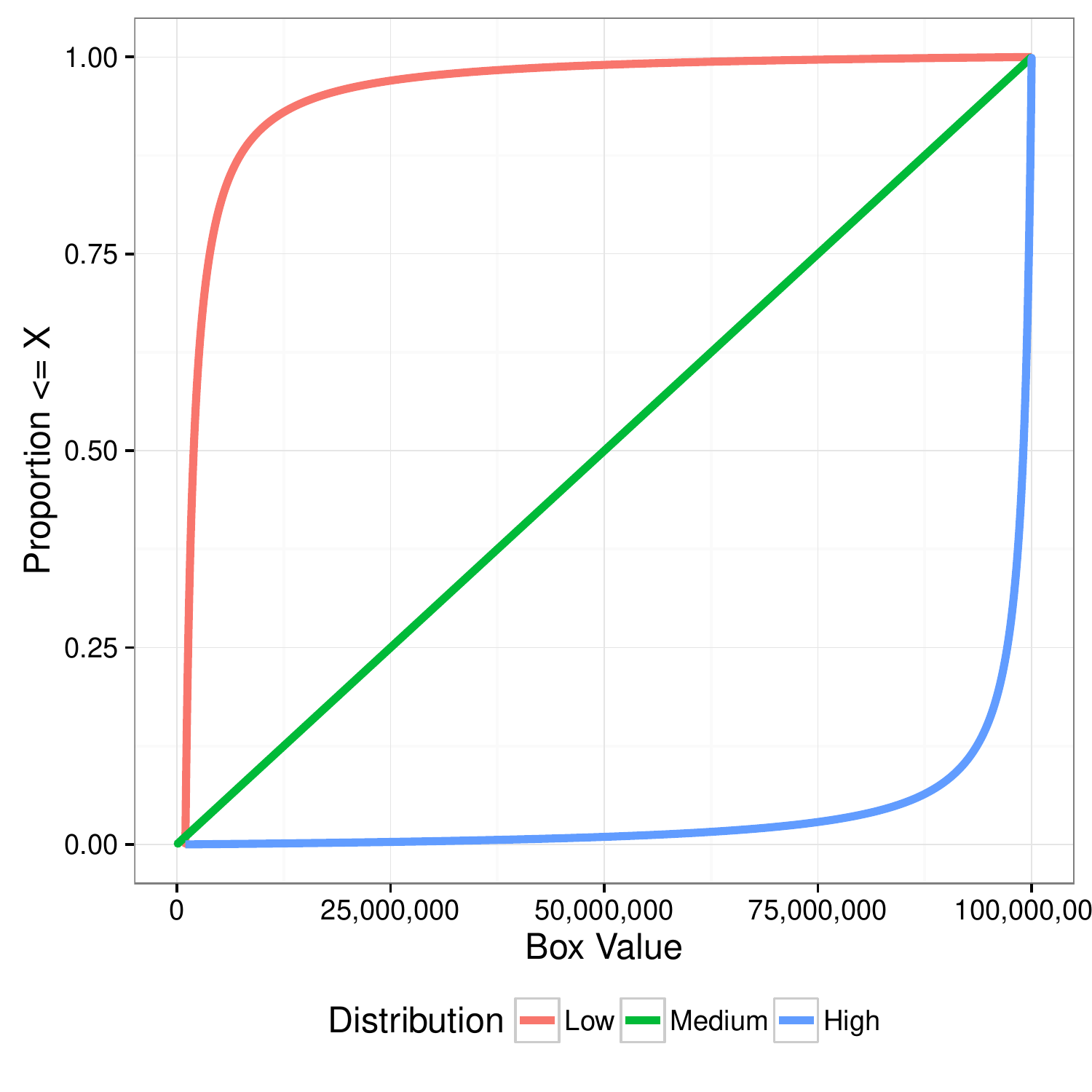}
\caption{
    Cumulative distribution functions (CDFs) of the three distributions from which box values were randomly drawn in the experiment. As probability density functions, the ``low'' distribution is strongly positively skewed, the ``medium'' distribution is a uniform distribution, and the ``high'' distribution is strongly negatively skewed.
}
\label{fig:drawn_dists}
\end{figure}

\subsection{Experimental Conditions} To allow for robust conclusions that are
not tied to the particularities of one variant of the game, we randomly varied
two parameters of the game: the distributions and the number of boxes. Each
player was tied to these randomly assigned conditions so that their immediate
repeat plays, if any, would be in the same conditions.
\subsubsection{Random assignment to distributions} The box values were randomly
drawn from one of three probability distributions, as pictured in
Figure~\ref{fig:drawn_dists}. The maximum box value was 100 million, though
this was not known by the participants. The ``low'' condition was strongly
negatively skewed. Random draws from it tend to be less than 10 million, and
the maximum value tends to be notably different than the next highest value.
For instance, among 15 boxes drawn from this distribution, the highest box
value is, on average, about 14.5 million dollars higher than the second highest
value. In the ``medium'' condition numbers were randomly drawn from a uniform
distribution ranging from 0 to 100 million. The maximum box values in 15 box
games are on average 6.2 million dollars higher than the next highest values.
Finally, in the ``high'' condition, boxes values were strongly negatively
skewed, and bunched up near 100 million. In this condition, most of the box
values tend to look quite similar (typically eight-digit numbers greater than
98 million). Among 15 boxes, the average difference between the maximum value
and the next highest is rather small at only about 80,000 dollars. Note that
the distributions are merely window dressing and are irrelevant for playing the
game. Players only need to attend to percentiles of the distribution to make
optimal stopping decisions. However, the varying distributions leads to more
generalizable results than an analysis of a single, arbitrary setting.

\subsubsection{Random assignment to number of boxes} The second level of random
assignment concerned the number of boxes, which was either 7 or 15. While one
would think this approximate doubling in the number of boxes would make the
game quite a bit harder, it only affects theoretically optimal win rates by
about 2 percentage points, as will be shown. Like with the distributions,
varying the number of boxes leads to more generalizable results.


With either 7 or 15 boxes and three possible distributions, the experiment had
a $2 \times 3$ design. In the 7 box condition, 1145, 1082, and 1103 participants
were randomly assigned to the low, medium, and high distributions,
respectively, and in the 15 box condition, the counts were 1065, 1127, and 1015,
respectively. 
The number of subjects assigned to the different conditions was not
signficiantly different whether comparing the two box conditions (7 or 15), the three
distributions of box values (low, medium, or high), or all six cells of the experiment
by chi-squared tests (all $p$s $> 0.05$).

\subsection{Optimal play} Before we begin to analyze the behavioral data
gathered from these experiments we first discuss how one would play this game
optimally. Assume values $X_{t}$ are drawn in an independently and identically
distributed fashion from a cumulative distribution function $F$. One period
consists of a player opening a box with a realization $x_{t}$ of $X_{t}$ in box
$t$. Periods are numbered in reverse order starting at $T$, so $t
= T, \ldots, 1$. Periods are thus numbered in the reverse order of the boxes
in the game, that is, opening the first box implies being in the seventh period (of a seven box game).
An action is to select or reject. For example at time $t$,
select box $t$, otherwise reject box $t$. Let \[h_{t} =\max \; \, \left\{x_{t}
,...,x_{T} \right\}.\] The history summary $h_{t+1}$ is visible to the player
at time $t$. The payoff of the player who selects box $t$ is
\[\left\{\begin{array}{cc} {1} & {h_{t} =h_{1} } \\ {0} & {h_{t} <h_{1} }
\end{array}\right. .\] Thus, the player only wins when they select the highest
value.  The problem is nontrivial because they are forced to choose without
knowing future realizations of the $X_i$.

Optimal players will adopt a threshold rule, which says, possibly as a function of the
history, accept the current value if it is greater than a critical dollar value
$c_{t}$.  It is a dominant strategy to reject any realization worse than the
best historically observed, except for the last box which must be accepted if
opened.  In addition, in our game, a pop-up warning prevented players from
choosing dominated boxes.

With a known distribution independently distributed across periods, the
critical dollar value will be the maximum of the historically best value and a
critical dollar  value that does not depend on the history.  The reason that the
critical dollar value does not depend on the history is that there is nothing to
learn about the future (known distribution) so it is either better to
accept the current value than wait for a better future value or not; the
point of indifference is exactly our critical dollar value.  Thus, the threshold
comes in the form $\max \{c_t, h_t\}$.

In addition, $c_{t}$ is non-decreasing in $t$.  Suppose, for the sake of
contradiction, that $c_{t} > c_{t+1}$.  Taking any candidate in the interval
$(c_{t+1}, c_{t})$ entails accepting a candidate and then immediately
regretting it, because as soon as the candidate is accepted, the candidate is
no longer acceptable, being worse than $c_{t}$.

Let $i = T - t,$ and $z_{i} = F(c_{t})$. We refer to the $z_{t}$ as the
critical values, which are are the probabilities of observing a value less than
the critical dollar values $c_{t}$. Let $p_{t}(h)$ be the probability of a win
given a history $h$. Table \ref{tbl:1} provides the critical values $z_{t}$ and
the probability of winning given a zero history $p_{t}(0)$ for fifteen periods.
Derivations of these figures are found in section \ref{sec:comp_cv_pwin} in the
Appendix.

\begin{table}\centering
  \caption{Critical values and probability of
    winning given a known distribution of values for up to 15 boxes.}
  \label{tbl:1}
  \begin{tabular}{p{0.35in}|p{0.55in}|p{0.55in}}
     Boxes left, $t$ & Critical values $z_{t}$ & Pr(Win) $p_{t}(0)$ \\ \hline 
  1 & 0       &    1   \\       
  2 & 0.5     & 0.750  \\     
  3 & 0.6899  & 0.684  \\     
  4 & 0.7758  & 0.655  \\   
  5 & 0.8246  & 0.639  \\   
  6 & 0.8559  & 0.629  \\   
  7 & 0.8778  & 0.622  \\   
  8 & 0.8939  & 0.616  \\   
  9 & 0.9063  & 0.612  \\   
  10 & 0.9160 & 0.609  \\   
  11 & 0.9240 & 0.606  \\   
  12 & 0.9305 & 0.604  \\   
  13 & 0.9361 & 0.602  \\   
  14 & 0.9408 & 0.600  \\   
  15 & 0.9448 & 0.599 
  \end{tabular}
\end{table}

The relevant entries for our study are the games of 7 and 15 periods.
These calculations, which coincide with those found in
\citet{gilbert-secretary},
who did not provide Equation
\eqref{GrindEQ__4_}
(see the Appendix), show that experienced players, who know the
distribution, can hope to win at best 62.2\% of the games for 7 period
games and just under 60\% of the time for 15 period games.  Note that these numbers compare favorably with the usual secretary result, which are lesser  for all game lengths, converging to the famous $1/e$ as the length diverges.
Thus there is substantial value in knowing the distribution.

As is reasonably well known, the value of the classical secretary solution
can be found by choosing a value $k$ to sample, and then setting the best
value observed in the first $k$ periods as a critical value.  The
distribution of the maximum of the first $k$ is $F(x)^{k}$.  The
probability that a better value is observed in round $m$ is $(1 - F(x)) F(x)^{m - k - 1}$.  Suppose this value is $y$; then this value wins with probability $F(y)^{T-m}$.  Thus the probability of winning for a fixed value of $k$ and $T$ periods is ${\tfrac{k}{T}} \sum_{m=k+1}^{T}{\tfrac{1}{m-1}}$.
See the derivation in Section \ref{sec:comp_pwin_unknown} in the Appendix.
The optimal value of $k$ maximizes $\frac{k}{T} \sum
_{m=k+1}^{T}\frac{1}{m-1}  $ and is readily computed to yield
Table~\ref{tbl:2}.  
Comparing the probability of winning shown in
Tables~\ref{tbl:1} and~\ref{tbl:2} shows that making the distribution
learnable allows for a much higher rate of winning.

\begin{table}\centering
\caption{The probability of winning a game in the classical secretary
problem (unknown distribution of values) for up to fifteen boxes}
\label{tbl:2}
 \begin{tabular}{c|c}
Game Length & Classical Secretary \\ 
(Periods)   & Problem Pr(Win) \\ \hline 
 1 &   1 \\       
 2 &   0.50  \\     
 3 &   0.50  \\     
 4 &   0.458 \\   
 5 &   0.433 \\   
 6 &   0.428 \\   
 7 &   0.414 \\   
 8 &   0.410 \\   
 9 &   0.406 \\   
 10 &  0.399 \\   
 11 &  0.398 \\   
 12 &  0.396 \\   
 13 &  0.392 \\   
 14 &  0.392 \\   
 15 &  0.389
\end{tabular}
\end{table}


How well can players do \textit{learning} the distribution? To model this, we
consider an idealized agent that plays the secretary problem repeatedly 
and learns from experience.
The agent begins with the critical
values and learns the percentiles of the distribution from experience; it will
be referred to as the ``LP'' (learn percentiles) agent. 
The agent has a perfect memory, makes no mistakes, has derived the critical
values in Table \ref{tbl:1} correctly, and can re-estimate the percentiles of
a distribution with each new value it observes. It is difficult to imagine a 
human player being able to learn at a faster rate than the LP agent. We thus
include it as an unusually strong benchmark.

\subsection{Learning Percentiles: The LP agent} 
\label{sec:lpagent}
The LP agent starts off knowing
the critical values for a 7 or 15 box game in percentile terms. To be precise,
these critical values are the first 7 or 15 rows under the heading  $z_{t}$ in
Table~\ref{tbl:1}. (Despite the term ``percentile'', we use decimal notation
instead of percentages for convenience.) The reason that the LP agent is not
given the critical values as raw box values is that these would be unknowable
because the distribution is unknown before the first play. However, it is
possible to compute these critical values as percentiles from first principles,
as we have done earlier in this section and in the Appendix.
Armed with these critical values, the LP agent converts the box values it
observes into percentiles in order to compare them to the critical values. The
first box value the LP agent sees gets assigned an estimated percentile of .50.
If the second observed box value is greater than the first, it estimates the
second value's percentile to be .75 and re-estimates the first value's
percentile to be .25. If the second value is smaller than the first, it assigns
the estimate of .25 to the second value and .75 to the first value. It
continues in this way, re-estimating percentiles for every subsequent box value
encountered according to the percentile rank formula: 
\begin{equation}
\frac{N_{<} + 0.5 N_{=}}{N}
    \label{eqn:pertile}
\end{equation}
where $N_{<}$ is the number of values seen so far that are less
than the given value, $N_{=}$ is the number of times the given value has occurred
so far, and $N$ is the number of boxes opened so far.

After recomputing all of the percentiles, the agent compares the percentile of
the box just opened to the relevant critical value and decides to stop if the
percentile exceeds the critical value, or decides to continue searching if it
falls beneath it, making sure never to stop on a dominated box unless it is in
the last position and therefore has no choice. Recall that a dominated box is
one that is less than the historical maximum in the current game. The encountered
values are retained from game to game, meaning that
the agent's estimates of the percentiles of the distribution will approach perfection and win rates
will approach the optima in Table~\ref{tbl:1}.

\begin{figure}[t]
\centering
\includegraphics[width=0.90\textwidth]{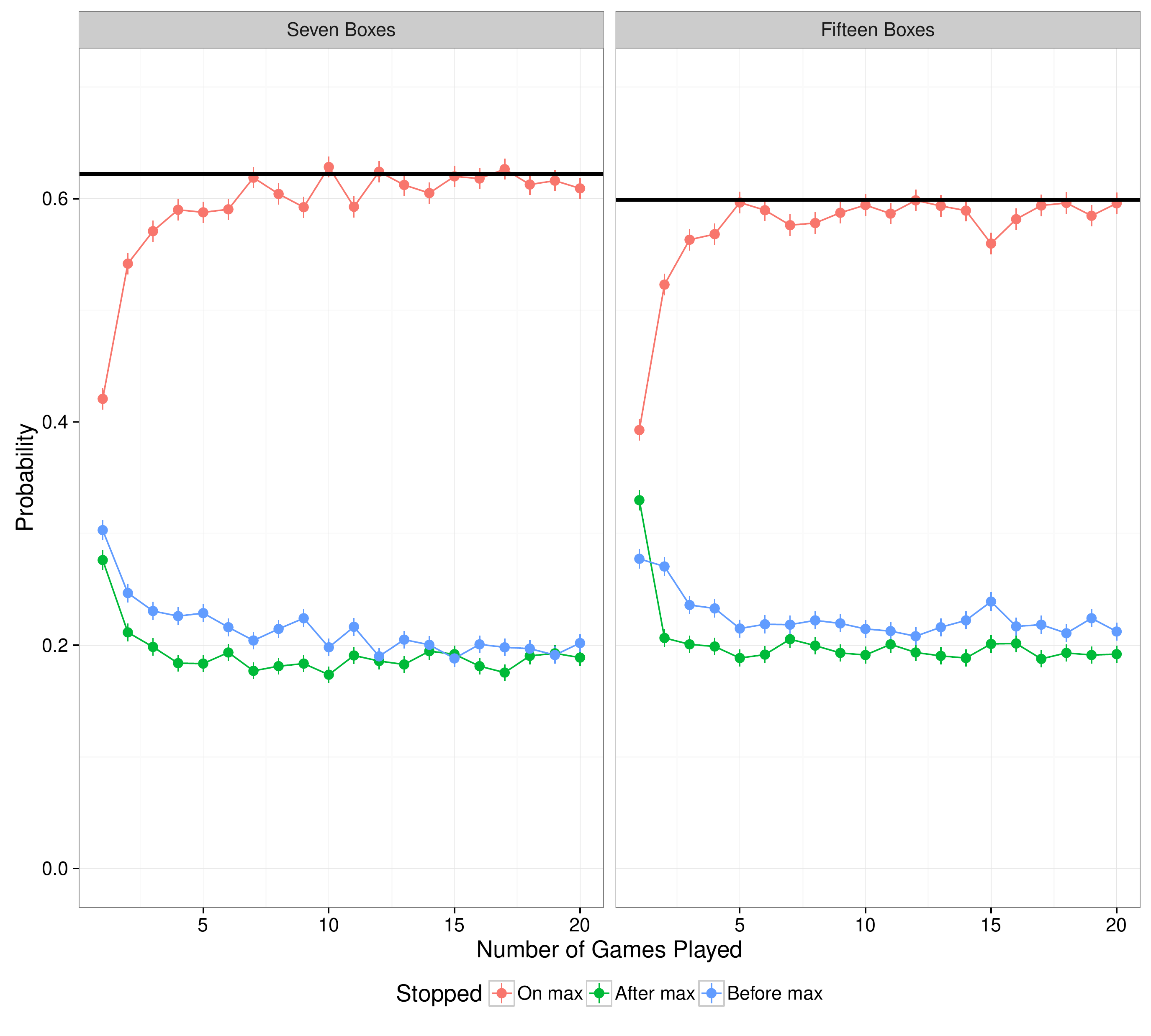}
\caption{
    Rates of winning (red lines), stopping too soon (blue lines) and
    stopping too late (green lines) for the LP agent. The theoretically maximal win rates for 7 and 15
    boxes are given by the solid black lines.
}
\label{fig:simulation_win_errors}
\end{figure}

How well does the LP agent perform? 
Figure \ref{fig:simulation_win_errors} shows its performance. 
Comparing its win rate on the first play to the 7 and 15 box entries in Table~\ref{tbl:2}, we see that the LP agent matches the performance of the optimal player of the classic secretary problem in its first game. 
Performance increases steeply over the first three games and achieves the
theoretical maxima (black lines) in seven or fewer games.
In any given game a player can either stop when it sees the maximum value, in which case it wins, or the player could stop before or after the maximum value, in which case it loses.  
In addition to the win rates, Figure \ref{fig:simulation_win_errors} also shows how often agents commit these two types of errors. Combined error is necessarily the complement of the win rate so the steep gain in one implies a steep drop the other. Both agents are more likely to stop before the maximum as opposed to after it, which we will see is also the case with human players.  

The LP agent serves as strong benchmark against which human performance can be
compared. It is useful to study its performance in simulation because the
existing literature provides optimal win rates for many variations of the
secretary problem, but is silent on how well an idealized agent would do when
learning from scratch. In addition to win rates, these agents show the patterns
of error that even idealized players would make on the path to optimality. In
the next section, we will see how these idealized win and error rates compare to  
those of the human players in the experiment. 

\section{Behavioral Results: Learning Effects}

As 48,336 games were played by 6,537 users, the average user played 7.39
games. Roughly half (49.6\%) of users played 5 games or more, a quarter
(23.2\%) played 9 games or more, and a tenth (9.3\%) played 16 games or more.

\begin{figure}[t]
\centering
\includegraphics[width=0.9\textwidth]{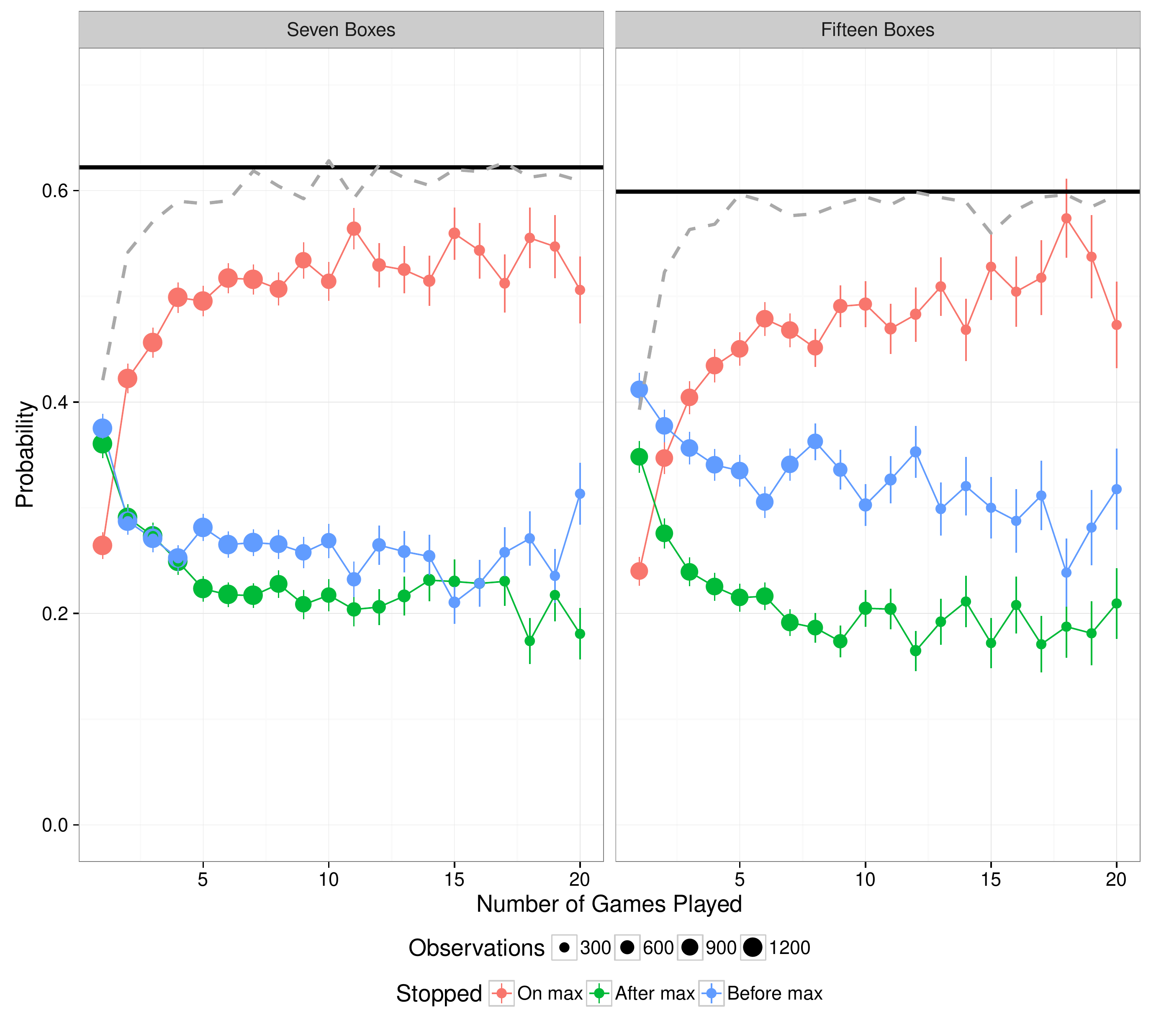}
\caption{Solid red, blue and green lines indicate the rates of winning the game and committing errors for human players with varying levels of experience. Error
  bars indicate $\pm 1$ standard error; when they are not visible they are
smaller than the points.
The area of each point is proportional to the number of players in the average.
The graph is cut at 20 games as less than 1\% of games played were beyond a
user's 20th.  The dashed gray line is the rate of winning the game for the
LP agent.  The solid black horizontal lines indicate the maximal win rate
for an agent with perfect knowledge of the distribution.
}
\label{fig:prob_win}
\end{figure}

\begin{figure}
\centering
\includegraphics[width=0.9\textwidth]{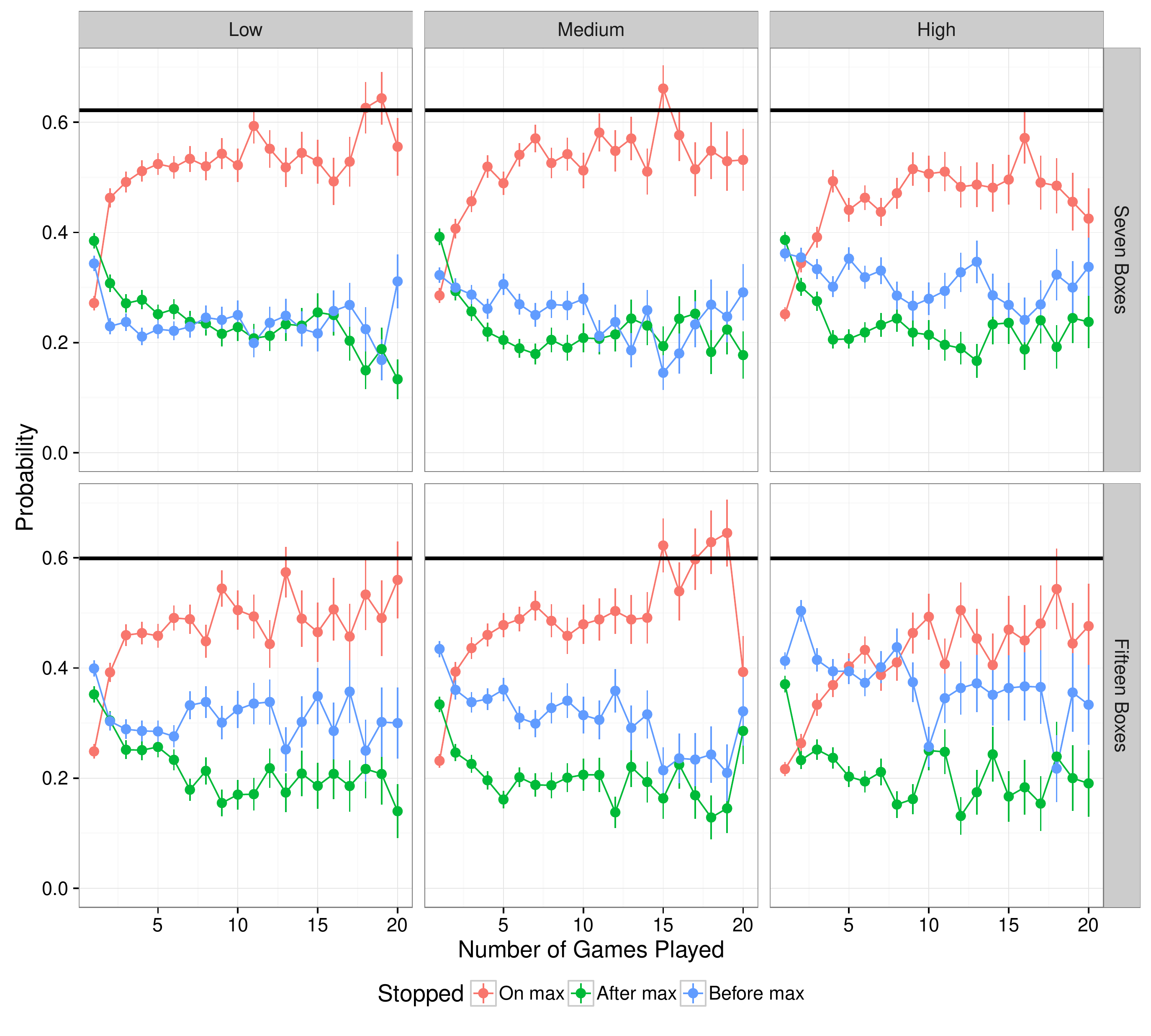}
\caption{Solid red, blue, and green lines indicate the rates of winning the game and committing errors for human players with varying levels of experience. Error
  bars indicate $\pm 1$ standard error; when they are not visible they are
smaller than the points. The solid black horizontal lines indicate the maximal win rate
for an agent with perfect knowledge of the distribution.}
\label{fig:prob_win6}
\end{figure}

Prior research \citep[e.g.,][]{lee-secretary} has found no evidence of learning
in repeated secretary problems with known distributions.  What happens with
unknown but learnable distribution? As shown in Figure~\ref{fig:prob_win},
players rapidly improve in their first games and come within five to ten
percentage points of theoretically maximal levels of performance. The leftmost
point on each red curve indicates the how often  first games are won.  The next
point to the right represents second games, and so on.  The solid black lines
at .622 and .599 show the maximal win rate attainable by an agent with perfect
knowledge of the distribution.  Note that these lines are not a fair comparison
for early plays of the game in which knowledge of the distribution is imperfect
or completely absent; in pursuit of a fair benchmark, we computed the
win rates of the idealized LP agent shown in the dashed gray lines.

Performance in the first games, in which players have very little knowledge of
the distribution is quite a bit lower than would be expected by optimal play in
the classic secretary problem with 7 (optimal win rate .41) or 15 boxes
(optimal win rate .39). Thus, some of the learning has to do with starting
from a low base. However, the classic version's optima are reached by about the
second game and improvement continues another 10 to 15 percentage points beyond
the classic optima.

One could argue that the apparent learning we observe is not learning at all
but a selection effect. By this logic, a common cause (e.g., higher
intelligence) is responsible both for players persisting longer at the game and
winning more often.  To check this, we created Figure~\ref{fig:prob_win7}, in
the Appendix, which is a similar plot except it restricts to players who played
at least 7 games. Because we see very similar results
with and without this restriction,
we conclude that Figure \ref{fig:prob_win} reflects learning and not selection
effects.

Recall that our experiment had a 2$\times$3 design with either 7 or 15
boxes and one of three possible underlying distributions of the box values.
Figure \ref{fig:prob_win} shows the average probability of our subjects
winning in the 7 and 15 box treatments aggregated over the three different
distributions of box values.  Figure \ref{fig:prob_win6} shows the
probability of people winning in each of the six treatments of our
experiment.  First, observe that the probability of winning, indicated by
the red lines, increases towards the maximal win rate in each of the six
treatments.
In all six treatments, most of the learning happens
in the early games with diminishing returns to playing more games. This
qualitative similarity demonstrates the robustness of the finding that there
is rapid and substantial learning in just a few repeated plays of the secretary
problem.
By comparing the probability of winning across all six treatments one can see that the low
and medium distributions were about equally as difficult and that the high
distribution was the hardest as it had the lowest probability of winning.  To
understand this, we next examine the types of errors the participants made. 
The probability of stopping after the maximum box value, indicated by the
green lines, is fairly similar across all six treatments.  There is,
however, variation in the probability of stopping before the max, indicated
by the blue lines, across the six treatments.  Participants were more
likely to stop before the maximum box in the high condition than the
others which explains why the subjects found this treatment to be the hardest.
Since the overall qualitative trends are fairly similar across the three
different distributions of box values we will aggregate over them in the
analyses that follow.

\begin{figure}
\centering
\includegraphics[width=0.5\textwidth]{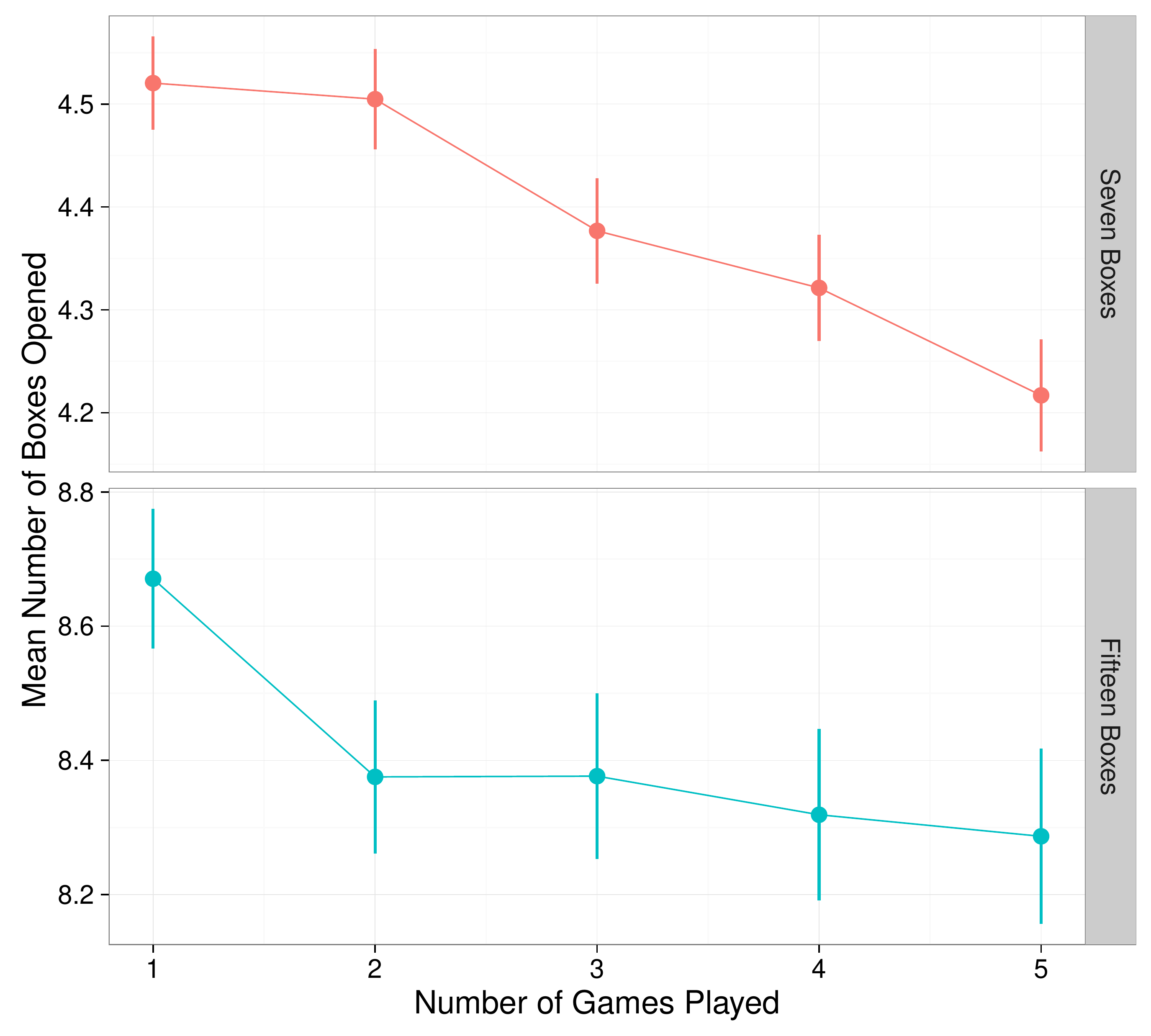}
\caption{Search depth for players in their first games measured by the number of boxes opened. Error
  bars indicate $\pm 1$ standard error}
\label{fig:search_depth}
\end{figure}

Having established that players do learn from experience, we turn our attention
to what is being learned. One overarching trend is that soon after their first
game, people learn to search less. As seen in Figure \ref{fig:search_depth}, in
the first five games, the depth of search decreases by about a third of one
box. Players can lose by stopping too early or too late. These search depth
results suggest that stopping too late is the primary concern that
participants address early in their sequence of games. This is also reflected
in the rate of decrease in the ``stopping after max'' errors in Figure
\ref{fig:prob_win} and \ref{fig:prob_win6}.  In both panels, rates of stopping after the maximum
decrease most rapidly.

\begin{figure*}
  \centering
  \subfigure[Human Players]{
\includegraphics[width=0.48\textwidth]{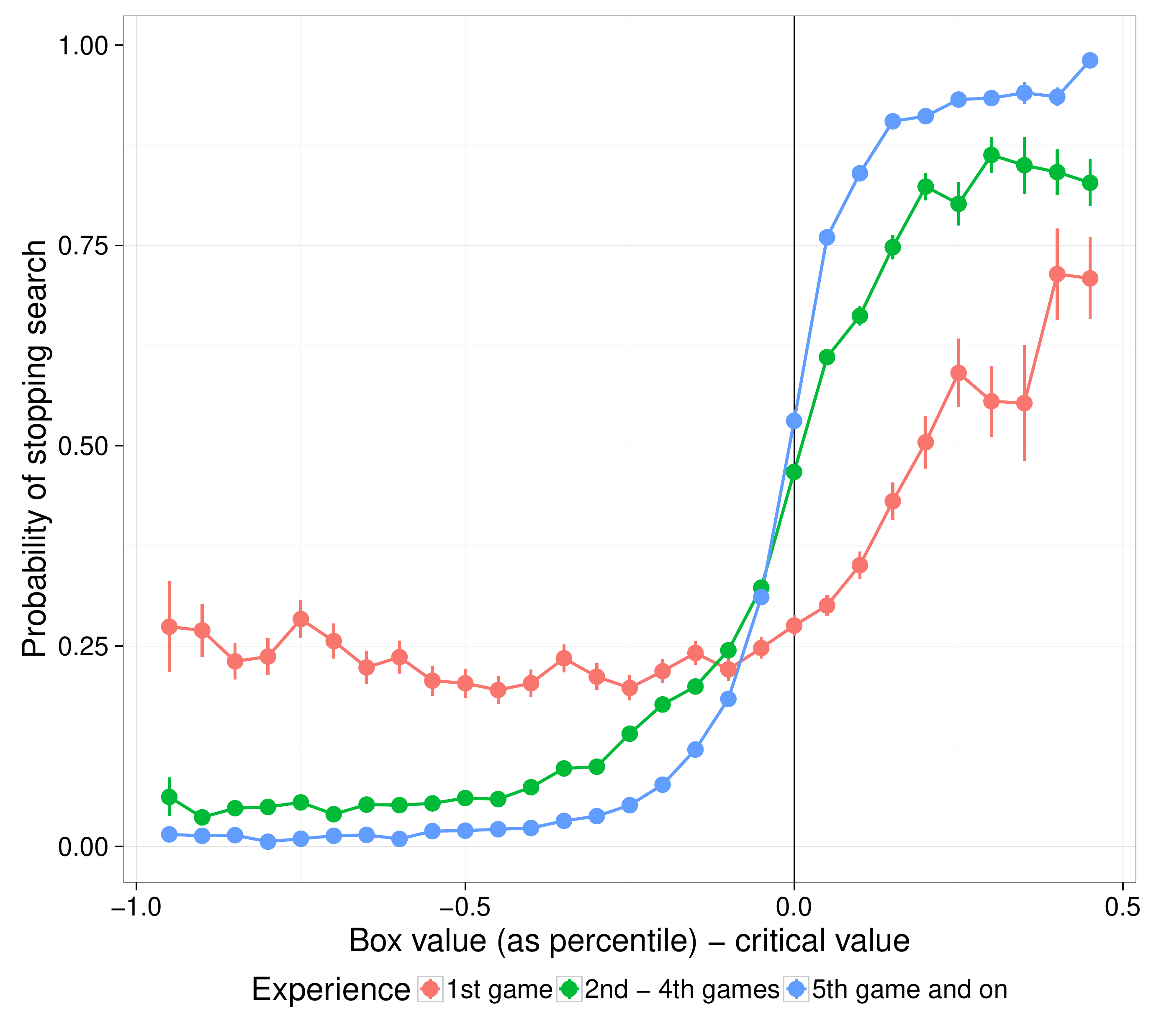}
\label{fig:prob_stop}
  }%
  \subfigure[LP Agent]{
\includegraphics[width=0.48\textwidth]{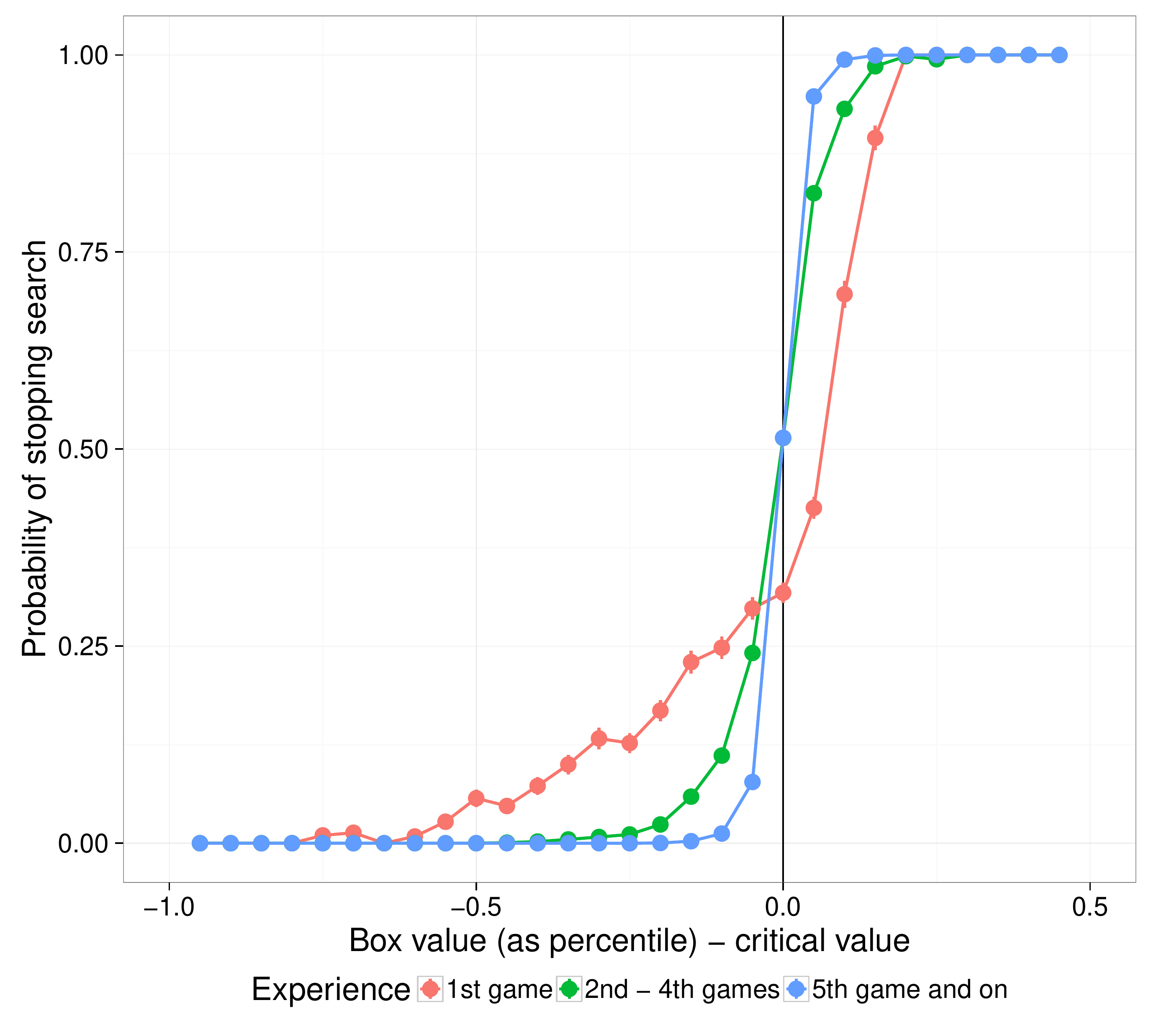}
\label{fig:prob_stop_sim}
  }
  \caption{
Left panel: Empirical rates of stopping search for box values above and
below the critical values. 
Right panel: Version of \ref{fig:prob_stop} with data from simulated agents instead of human players. 
Only non-dominated boxes are included in this analysis. 
}
\label{fig:prob_stop_reg_and_sim}
\end{figure*}

\subsection{Optimality of box-by-box decisions}
\label{sec:optimality}
Do players' decisions become
more optimal with experience? Recall that when the distribution is known
one can make an optimal decision about when to stop
search by comparing the percentile of an observed box value to the relevant critical
value in Table~\ref{tbl:1}. If the observed value exceeds the critical value,
it is optimal to stop, otherwise it is optimal to continue search. In Figure
\ref{fig:prob_stop_reg_and_sim}, the horizontal axis shows the difference
between observed box values (as percentiles) and the critical values given in
Table~\ref{tbl:1}. The vertical axis shows the probability of stopping search
when values above or below the critical values are encountered. The data in the
left panel are from human players and reflect all box-by-box decisions. 

An optimal player who knows the exact percentile of any box value, as well as
the critical values, would always 
keep searching (stop with probability 0) when encountering a value whose percentile is below the critical value. 
Similarly, such an optimal player would always stop searching (stop with probability 1) when encountering
a value whose percentile exceeds the critical value.
Together these two behaviors would lead to a step function: stopping with probability 0 to the left of the
critical value and stopping with probability 1 above it. 

Figure \ref{fig:prob_stop} shows that on first games (in red), players tend to
both under-search (stopping about 25\% of the time when below the critical
value) and to over-search (stopping at a maximum of 75\% of the time instead of
100\% of the time when above the critical value).  In a player's second through
fourth games (in green) performance is much improved, and the probability of
stopping search is close to the ideal .5 at the critical value. The blue curve,
showing performance in later games, approaches ideal step function. To address
possible selection effects in this analysis, Figure \ref{fig:prob_stop_select}
in the Appendix is similar to Figure \ref{fig:prob_stop_reg_and_sim} except it
restricts to the games of those who played a substantial number of games.
Because there are fewer observations, the error bars are larger but the overall
trends are the same suggesting again that these results are due to learning as
opposed to selection bias.

Attaining ideal step-function performance is not realistic when learning about
the distribution from experience. Comparison to the LP agent provides a
baseline of how well one could ever hope to do. Figure~\ref{fig:prob_stop_sim}
shows that in early games, even the LP agent both stops and continues when it
should not. Failing to obey the optimal critical values is a necessary
consequence of learning about a distribution from experience. Compared to the
human players, however, the LP agent approaches optimality more rapidly.
Furthermore, on the first game, it is less likely to make large-magnitude
errors. While the human players never reach the ideal stopping rates of 0 and 1
on the first game, the LP agent does so when the observed values are
sufficiently far from the critical vales.

Figure~\ref{fig:prob_stop} shows that stopping decisions stay surprisingly
close to optimal thresholds in aggregate.  Recall that the optimal thresholds
depend on how many boxes are left to be opened (see Table~\ref{tbl:1}). Because 
early boxes are encountered more often than late ones, this analysis could
be dominated by decisions on the early boxes. 
To address this, in what follows we estimate the threshold of each
box individually.

\subsection{Effects of unhelpful feedback} 

One may view winning or losing the
game as a type of feedback for the player to indicate if the strategy used
needs adjusting.  Taking this view, consider a player's first game. Say this
player over-searched in the first game, that is, they saw a value greater than the
critical value but did not stop on it. Assume further that this player won this
game.  This player did not play the optimal strategy but won anyway, so their
feedback was unhelpful. The middle panel of Figure~\ref{fig:unhelpful_first}
shows the errors made during a second game after over-searching and either
winning or losing during their first game. 
The red curve tends to be above the blue curve, meaning that players who stopped
too late but didn't get punished (blue) are less likely to stop on most box values in the next game, compared
to players who stopped too late and got punished (red).

Similarly, the bottom panel shows the blue curve to be above the red curve, meaning 
that players who stopped too early but didn't get punished (blue) are more likely to stop on most box values 
in the next game, compared to players who stopped too early and got punished (red).

This finding makes the results in Figures \ref{fig:prob_win} and
\ref{fig:prob_stop_reg_and_sim} even more striking as it is a reminder that the
participants are learning in an environment where the feedback they receive is
noisy.  Figure~\ref{fig:unhelpful_second} shows the errors in the fifth game
given the feedback from the first game.  Even a quick glance shows that the
curves are essentially on top of each other.  Thus, those who received
unhelpful feedback in the first game were able to recover---and perform just as
well as those who received helpful feedback---by the fifth game.

\begin{figure}[t]
\centering
  \subfigure[]{
\includegraphics[width=0.47\textwidth]{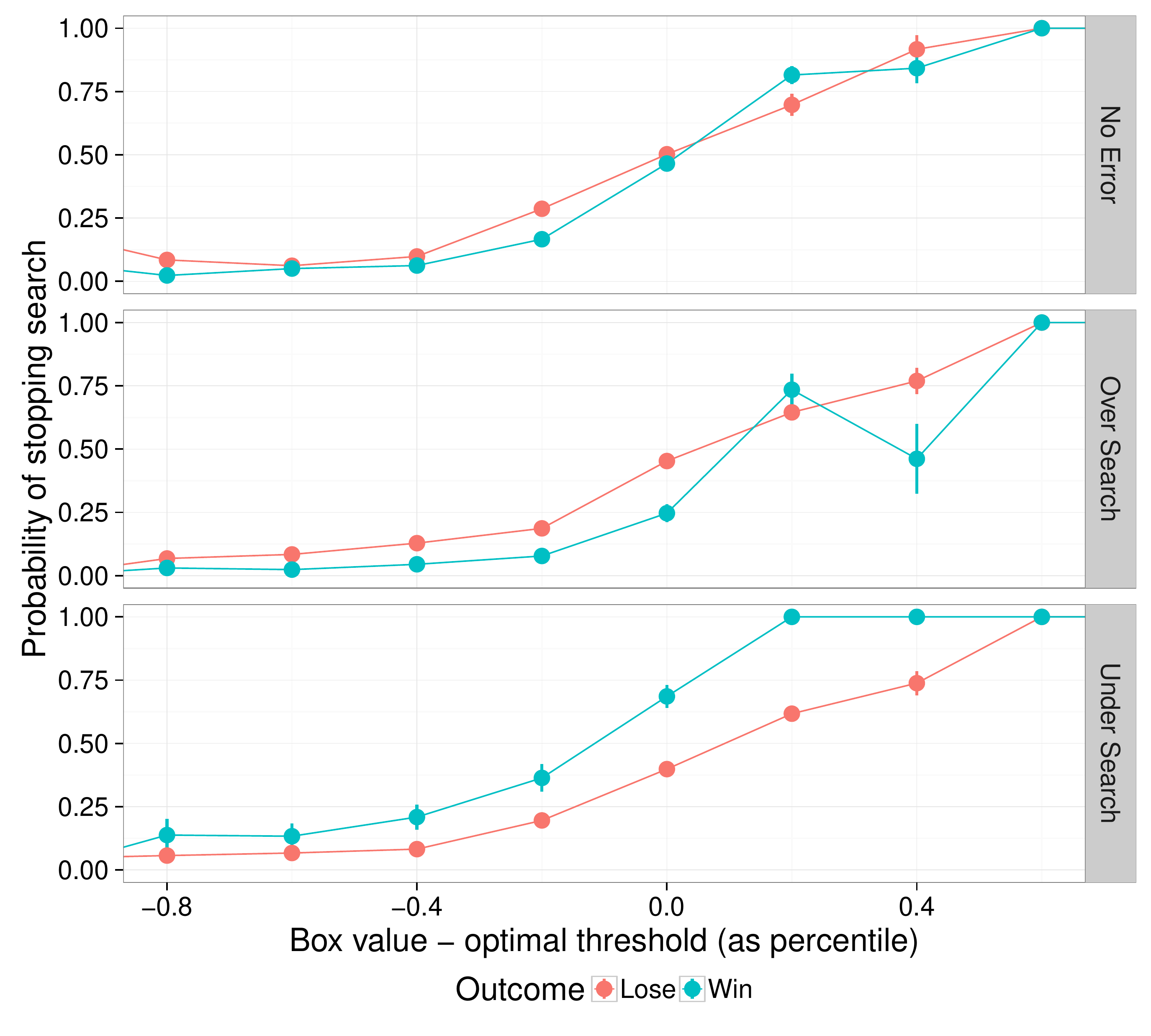}
\label{fig:unhelpful_first}
  }
  \subfigure[]{
\includegraphics[width=0.47\textwidth]{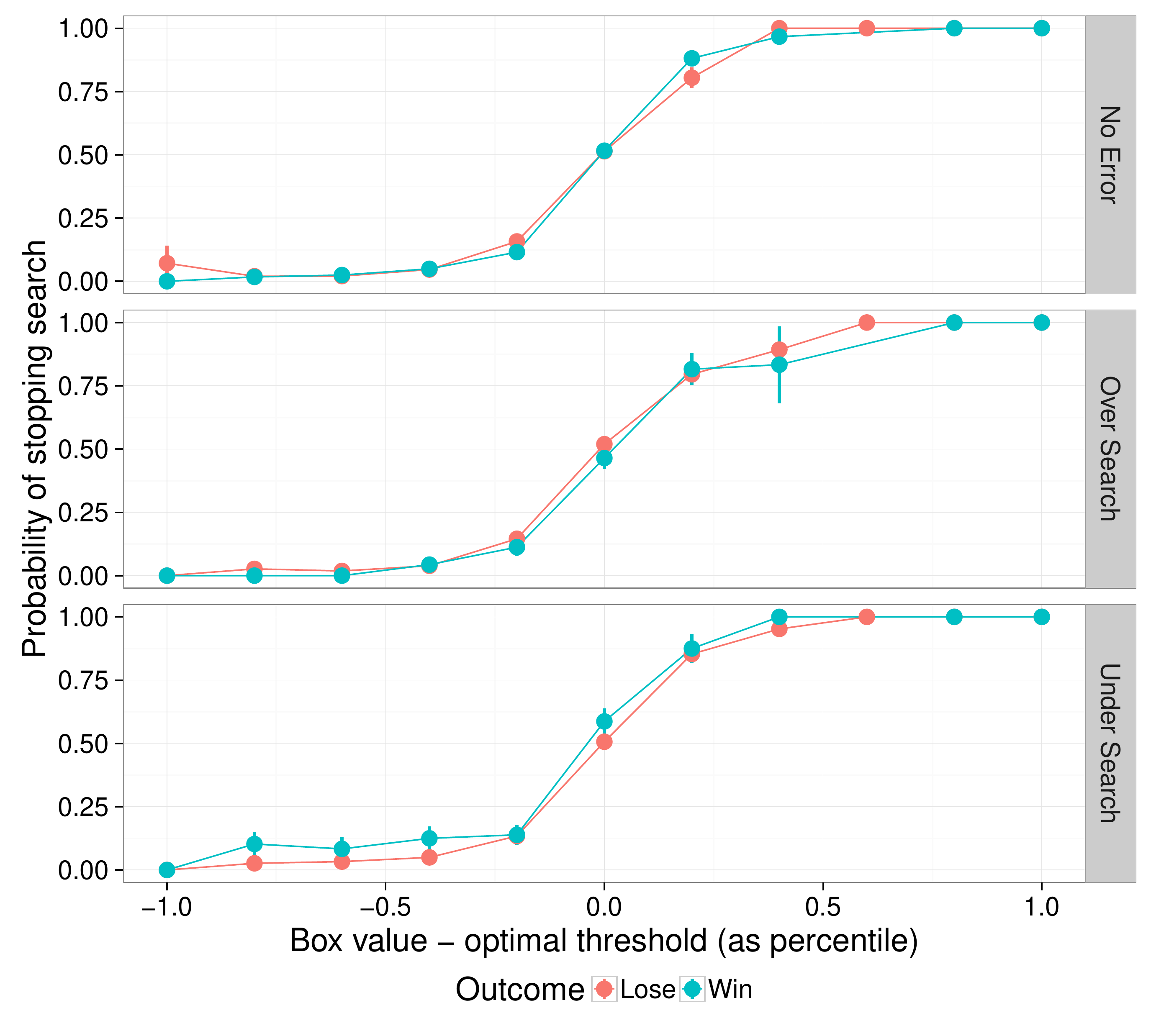}
\label{fig:unhelpful_second}
  }
\caption{Errors in the second (left) and fifth (right) games given 
    whether the first game was won (red curves) or lost (blue curves).
    Vertically arranged panels indicate what type of error, if any, was
    made on the first game. }
\label{fig:unhelpful_feedback}
\end{figure}

\section{Modeling Player Decisions}
\label{sec:model}

In this section we explore the predictive performance of several models of human behavior in the repeated secretary problem with learnable distributions.
We begin by describing our framework for evaluating predictive models, then describe the models, and finally compare their performance.

\subsection{Evaluation and comparison}
\label{sec:evaluation}

Our goal in this section is to compare several psychologically plausible models in terms of how well they capture human behavior in the repeated secretary game to give us some insight as to how people are learning to play the game.
Since our goal is to compare how likely each model is given the data the humans generated we use a Bayesian model comparison framework.
The models we compare, defined in Section~\ref{sec:model-defns}, are probabilistic, allowing them to express differing degrees of confidence in any given prediction.  This also allows them to capture heterogeneity between players. In contexts where players' actions are relatively homogeneous, their actions can be predicted with a high degree of confidence, whereas in contexts where players' actions differ, the model can assign probability to each action.

After opening each box, a player makes a binary decision about whether or not to stop.
Our dataset consists of a set of \emph{stopping decisions} $y^g_{t} \in \{0,1\}$ that the player made in game $g$ after seeing non-dominated box $t$.  If the player stopped at box $t$ in game $g$, then $y^g_{t}=1$; otherwise, $y^g_{t}=0$.
Our dataset also contains the \emph{history} $x^g_{T:t} = \left(x^g_T, x^g_{T-1}, \ldots, x^g_t\right)$ of box values that the player had seen until each stopping decision.  We represent the full dataset by the notation $\D$.

In our setting, a probabilistic model $f$ maps from a history $x^g_{T:t}$ to a probability that the agent will stop.
(This fully characterizes the agent's binary stopping decision.)
Each model may take a vector $\theta$ of parameters as input.
We assume that every decision is independent of the others, given the context.
Hence, given a model and a vector of parameters, the likelihood of our dataset is the product of the probabilities of its decisions; that is,
\[p\left(\D \given h, \theta \right) =
\prod_{(x^g_{T:t},y^g_t) \in \D} \left[ f\left(x^g_{T:t} \given \theta\right)y^g_t +
\left(1-f\left(x^g_{T:t} \given \theta\right)\right)\left(1-x^g_{T:t}\right) \right].
\]

In Bayesian model comparison, models are compared by how probable they are given the data.
That is, a model $f^1$ is said to have better predictive performance than model $f^2$ if
$p\left(f^1 \given \D\right) > p\left(f^2 \given \D\right)$, where 
\begin{equation}
    p\left(f \given \D\right) = \frac{p(f)p\left(\D \given f\right)}{p(\D)}. \label{eqn:model-probs}
\end{equation}
With no a priori reason to prefer any specific model, we can assign them equal prior model probabilities $p(f)$.
Comparing the model probabilities defined in Equation~\eqref{eqn:model-probs} is thus equivalent to comparing the models' \emph{model evidence}, defined as
\begin{equation}
    p\left(\D \given f\right) = \int_\Theta p\left(\D \given f, \theta \right)p(\theta)d\theta. \label{eqn:evidence}
\end{equation}
The ratio of model evidences $p\left(\D \given f^1\right) \big/ p\left(\D \given f^2\right)$ is called the \emph{Bayes factor} \citep[e.g., see][]{kruschke-doing}.  The larger the Bayes factor, the stronger the evidence in favor of $f^1$ versus $f^2$.

This probabilistic approach has several advantages.
First, the Bayes factor between two models has a direct interpretation: it is the ratio of probabilities of one model's being the true generating model, conditional on one of the models under consideration being the true model.
Second, it allows models to quantify the confidence of their predictions.
This quantification allows us to distinguish between models that are almost correct and those that are far from correct in a way that is impossible for coarser-grained comparisons such as predictive accuracy.

One additional advantage of the Bayes factor is that it provides some compensation for overfitting.
Models with a higher dimensional parameter space are penalized, due to the fact that the integral in Equation~\eqref{eqn:evidence} must average over a larger space.  The more flexible the model, the more of this space will have low likelihood, and hence the better the fit must be in the high-probability regions in order to attain the same evidence as a lower-parameter model.

The amount by which high-dimensional models are penalized by the Bayes factor depends strongly upon the choice of prior.  This dependence upon the relatively arbitrary choice of prior is undesirable.
An alternative approach is to evaluate models using \emph{cross-validation}, in which the data are split into a \emph{training set} that is used to set the parameters of the model, and a \emph{test set} that is used to evaluate the model's performance.

We use a hybrid of the cross-validation and Bayesian approaches.
We first randomly select a split $s=(\D^{\textup{train}}, \D^{\textup{test}})$, with
$\D^{\textup{train}} \union \D^{\textup{test}} = \D$ and $\D^{\textup{train}} \intersect \D^{\textup{train}} = \varnothing$.
We then compute the \emph{cross-validated model evidence} of the test set $\D^{\textup{test}}$ with respect to the prior updated by the training set $p\left(\theta \given \D^{\textup{train}}\right)$, rather than computing the model evidence of the full dataset $\D$ with respect to the prior $p(\theta)$.
To reduce variance due to the randomness introduced by the random split, we take expectation over the split, yielding
\begin{equation}
    \E_s p\left(\D^{\textup{test}} \given f, \D^{\textup{train}}\right) = \int\left[\int_\Theta p\left(\D^{\textup{test}} \given f, \theta \right)p\left(\theta \given \D^{\textup{train}}\right)d\theta\right]p(s)ds, \label{eqn:cv-evidence}
\end{equation}
where $p(s)$ is a uniform distribution over all splits.
The ratio of cross-validated model evidences
\begin{equation}
   \frac
   {\E_s p\left(\D^{\textup{test}} \given f^1, \D^{\textup{train}}\right)}
   {\E_s p\left(\D^{\textup{test}} \given f^2, \D^{\textup{train}}\right)}\label{eqn:cv-bayes}
\end{equation}
is called the \emph{cross-validated Bayes factor}.  As with the Bayes factor, larger values of \eqref{eqn:cv-bayes} indicate stronger evidence in favor of $f^1$ versus $f^2$.

The integral in Equation~\eqref{eqn:cv-evidence} is analytically intractable, so we followed the standard practice of approximating it using Markov chain Monte Carlo sampling.  Specifically, we used the PyMC software package's implementation \citep{salvatier-probabilistic} of the Slice sampler \citep{neal-slice} to generate $25000$ samples from each posterior distribution of interest, discarding the first $5000$ as a ``burn in'' period.
We then used the ``bronze estimator'' of \citet{alqallaf-crossvalidation} to estimate Equation~\eqref{eqn:cv-evidence}
based on this posterior sample.



\subsection{Models}
\label{sec:model-defns}

We start by defining our candidate models,
each of which assumes that an agent decides at each non-dominated box whether
to stop or continue, based on the history of play until that point.
For notational convenience, we represent a history of play by a tuple containing the number of boxes seen $i$, the number of non-dominated boxes seen $i^*$, and the percentile of the the current box $q_i$ as estimated using Equation~\ref{eqn:pertile}.
Formally, each model is a function $f:\N \times \N \times [0,1] \to [0,1]$ that maps from a tuple $(i, i^*, q_i)$ to a probability of stopping at the current box.

\begin{definition}[Value Oblivious]
    In the Value Oblivious model, agents do not attend to the specific box values.
    Instead, conditional upon reaching a non-dominated box $i$, an agent stops with a fixed probability $p_i$.
    \[f^\textup{value-oblivious}\left(i, i^*, q_i \,\middle|\, \left\{p_j\right\}_{j=1}^{T-1}\right) = p_i.\]
\end{definition}

\begin{definition}[Viable~k]
    The Viable~k model stops on the $k$th non-dominated box.
    \[f^\textup{viablek}\left(i, i^*, q_i \,\middle|\, k, \epsilon\right) =
    \begin{cases}
       \epsilon   &\text{if } i^* < k,\\
       1-\epsilon &\text{otherwise.}
   \end{cases}
   \]
   In this model and the next agents are assumed to err with probability $\epsilon$ on any given decision.
\end{definition}

\begin{definition}[Sample~k]
    The Sample~k model stops on the first non-dominated box that it encounters after having seen at least $k$ boxes, whether those boxes were dominated or not.  
    \[f^\textup{sample}\left(i, i^*, q_i \,\middle|\, k, \epsilon\right) =
    \begin{cases}
       \epsilon   &\text{if } i < k,\\
       1-\epsilon &\text{otherwise.}
   \end{cases}
   \]
   When $k = \lceil{T/e}\rceil$ and $\epsilon=0$, this corresponds to the optimal solution of the classical secretary problem in which the distribution is unknown.
\end{definition}

\begin{definition}[Multiple Threshold]
The Multiple Threshold model stops at box $i$ with increasing probability as the box value increases.
We use a logistic specification which yields a sigmoid function at each box $i$
such that at values equal to the threshold $\tau_i$ an agent stops with probability
0.5; an agent stops with greater (less) than 0.5 probability on values higher (lower) than $\tau_i$, with the probabilities becoming more certain as the value's distance from $\tau_i$ grows.
We also learn a single parameter $\lambda$ across all boxes representing how
quickly the probability changes as a box value becomes further from
$\tau_i$. Intuitively, $\lambda$ controls the slope of the
sigmoid\footnote{We considered models with one $\lambda$ per box but
  they did not perform appreciably better than the single $\lambda$ models.}.
    \[f^\textup{thresholds}\left(i, i^*, q_i \,\middle|\, \lambda, \left\{\tau_j\right\}_{j=1}^{T-1}\right) =
    \frac{1}{1 + \exp[\lambda(q_i - \tau_i)]}.
    \]
When the thresholds are set to the critical values of Table~\ref{tbl:1} such that $\tau_i = z_{(T-i+1)}$, this model corresponds to the optimal solution of the secretary problem with a known distribution.
\end{definition}

\begin{definition}[Single Threshold]
The Single Threshold model is a simplified threshold model in which agents compare all box values to a single threshold $\tau$ rather than box-specific thresholds.    
    \[f^\textup{single-threshold}\left(i, i^*, q_i \,\middle|\, \lambda, \tau\right) =
    \frac{1}{1 + \exp[\lambda(q_i - \tau)]}.
    \]
\end{definition}

\begin{definition}[Two Threshold]
\label{def:two-thresh}
The Two Threshold model is another simplified threshold model in which agents compare all ``early'' box values (the first $\lfloor{T/2}\rfloor$ boxes) to one threshold ($\tau_0$) and all ``late'' box values to another ($\tau_1$). 
    \[f^\textup{two-threshold}\left(i, i^*, q_i \,\middle|\, \lambda, \tau_0, \tau_1\right) =
    \frac{1}{1 + \exp[\lambda(q_i - \tau_\nu)]},
    \]
where
\[\tau_\nu = \begin{cases}
                \tau_0 &\text{if } i < \frac{T}{2}, \\
                \tau_1 &\text{otherwise.}
            \end{cases}
\]
This is a low-parameter specification that nevertheless allows for differing behavior as a game progresses (unlike Single Threshold).
\end{definition}

\paragraph{Priors.}
Each of the models described above has free parameters that must be estimated from the data.
We used the following uninformative prior distributions for each parameter:
\begin{align*}
    p_i          &\sim \text{Uniform}[0, 1]                   &\tau, \tau_i, \tau_0, \tau_1 &\sim \text{Uniform}[0, 1] \\
    k            &\sim \text{Uniform}\{1, 2, \ldots, T-1\}    &\lambda      &\sim \text{Exponential}(\mu=1000).\\
    \epsilon     &\sim \text{Uniform}[0,0.5]
\end{align*}
The hyperparameter $\mu$ for precision parameters $\lambda$ was chosen manually to ensure good mixing of the sampler.
Each parameter's prior is independent; e.g., in the single-threshold model a given pair $(\lambda, \tau)$ has prior probability $p(\lambda, \tau) = p(\lambda)p(\tau)$.

\subsection{Model comparison results}
\label{sec:comparison-results}
\begin{figure*}[tb]
\centering
  \subfigure[Seven Boxes]{
\includegraphics[width=0.48\textwidth]{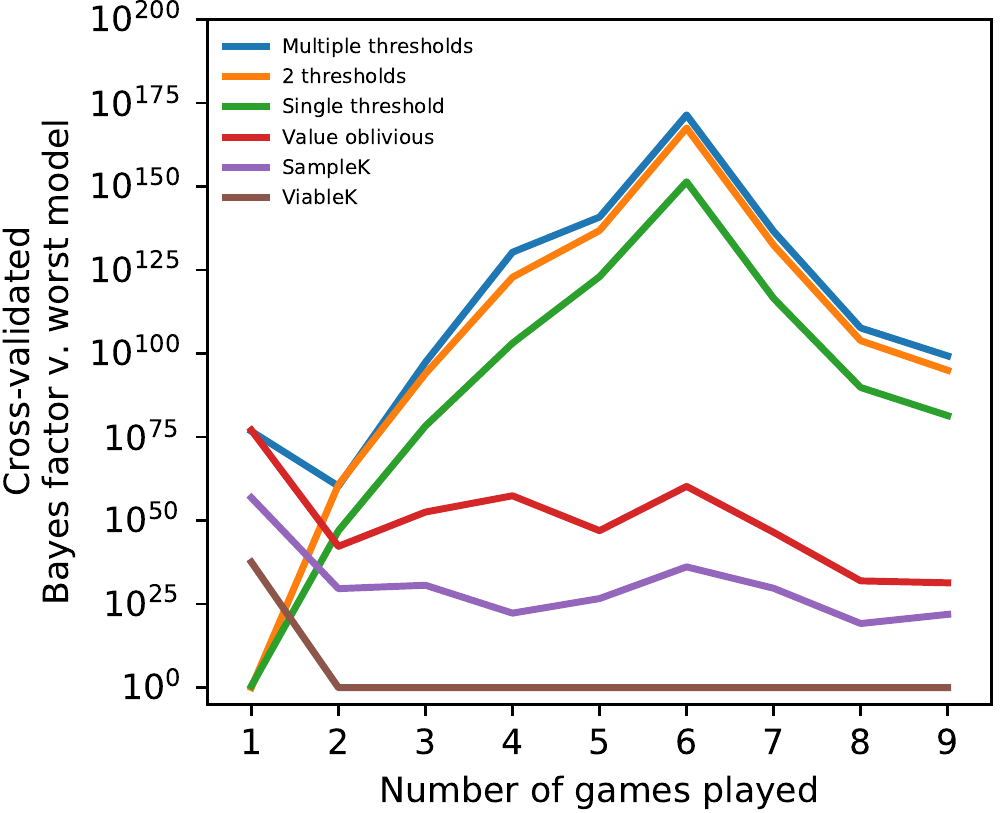}
\label{fig:bayes-factors7}
  }%
  \subfigure[Fifteen Boxes]{
\includegraphics[width=0.48\textwidth]{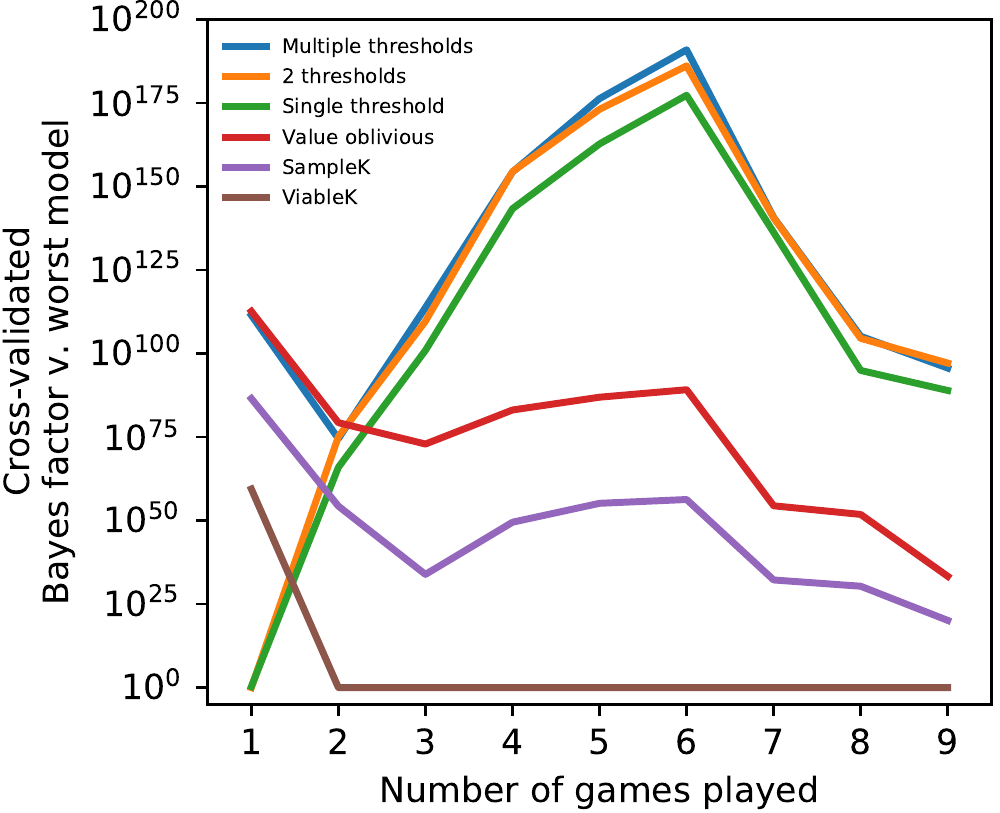}
\label{fig:bayes-factors15}
  }
  \caption{Cross-validated Bayes factors for various models, compared to the lowest-evidence model in each game.}
\label{fig:bayes-factors}
\end{figure*}

Figure~\ref{fig:bayes-factors} gives the cross-validated Bayes factors for each of the models of Section~\ref{sec:model-defns}.
The models were estimated separately for each number of games; that is,
each model was estimated once on all the first games played by
participants, again on all the second games, etc.
This allows us to detect learning by comparing the estimated values of the parameters across games.
The cross-validated Bayes factor is defined as a ratio between two cross-validated model evidences.
Since we are instead comparing multiple models, we take the standard approach of expressing each factor with respect to the lowest-evidence model for a given number of games.
These normalized cross-validated Bayes factors are consistent, in the sense that if the normalized cross-validated Bayes factor for model $h^1$ is $k$ times larger than the normalized cross-validated Bayes factor for $h^2$, then the cross-validated Bayes factor between $h^1$ and $h^2$ is $k$.
As a concrete example, the Two~Threshold model had the lowest cross-validated model evidence for participants' first games in the seven box condition; the cross-validated model evidence for the Value Oblivious model was $10^{21}$ times greater than that of the Sample~k model, and $10^{40}$ times greater than that of the Viable~k model.

In first game played, in both the seven box and fifteen box conditions, and in second game in the fifteen box condition, the best performing model was Value Oblivious.
In all subsequent games and in both conditions, Viable~k was the worst performing model and the Multiple~Threshold model was the best performing model.
%
The Two~Threshold model was the next-best performing model.\footnote{We tested whether a different boundary between the two thresholds would perform better by estimating a model in which the choice of boundary was a separate parameter.
This model actually performed worse that the Two~Threshold model.
This indicates that the data do not argue strongly for a different boundary, and hence the only effect of adding a boundary parameter was overfitting.}
%
%

Evidently, players behaved consistently with the optimal class of model for the known distribution---multiple thresholds---as early as the second game.
This is consistent with the observations of Section~\ref{sec:optimality}, in which players' outcomes improved with repeated play.
In addition, it is consistent with the learning of optimal thresholds in Figure~\ref{fig:prob_stop} but improves on that analysis because here the most common stopping points---the early boxes---do not dominate the average.
\begin{figure}
    \centering
    \includegraphics[width=0.98\textwidth]{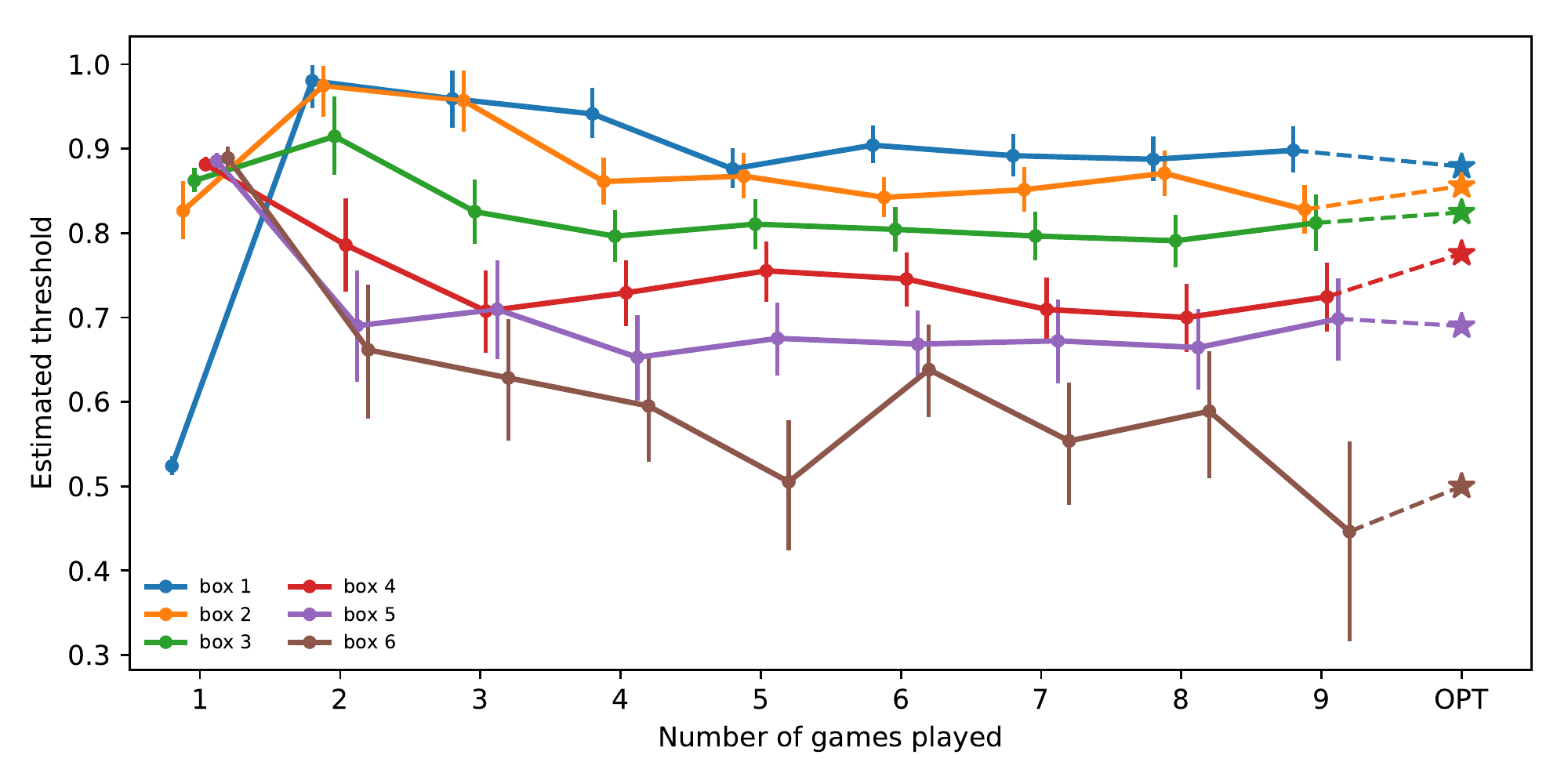}
    \caption{Estimated thresholds in the seven box games.
    The rightmost set of points are the optimal thresholds.
    Error bars represent the 95\% posterior credible interval.}
    \label{fig:thresholds7}    
\end{figure}
Furthermore, players' estimated thresholds approached the theoretically optimal values remarkably quickly.
Figure~\ref{fig:thresholds7} shows the estimated thresholds for the seven box condition, along with their 95\% posterior credible intervals.
The estimated thresholds for the second and subsequent games are strictly decreasing in the number of boxes seen, like the optimal thresholds.
Overall, the thresholds appear to more closely approximate their optimal values over time.
After only four games, each threshold's credible interval contains the
optimal threshold value.\footnote{In games 5--8, either one or two credible
  intervals no longer contain the corresponding optimal value; by game 9
  all thresholds' credible intervals again contain their optimal values.}
Thus, workers learned to play according to the optimal family of models and
learned the optimal threshold settings within that family of models.

The success of the Value Oblivious model in the first game (Figure~\ref{fig:bayes-factors}) suggests that neither of the threshold-based models fully capture players' decision making in their initial game.  This is further supported by the best-estimates of thresholds for the first game: unlike subsequent games which have thresholds that strictly decrease in number of boxes seen, in the first game the estimated thresholds are strictly \emph{increasing} in number of boxes seen.
This is consistent with players using a Value Oblivious model. If players who stop on later boxes do so for reasons independent of the box's value, then they will tend to stop on higher values merely due to the selection effect from only stopping on non-dominated boxes.

In sum, the switch from increasing to decreasing thresholds in Figure~\ref{fig:thresholds7} is consistent with moving from a value-oblivious strategy, which generalizes the optimal solution for the classical problem, to a threshold strategy, which generalizes the optimal strategy for known distributions. 
   

\section{Conclusion: Behavioral Insights}
\label{sec:conclusion}
The main research question we
addressed in this work is whether people improve at the secretary problem
through repeated play. In contrast to prior research
\citep{campbell-secretary, lee-secretary, seale-secretary}, across thousands
of players and tens of thousands of games, we document fast and steep learning
effects. Rates of winning increase by about 25 percentage points over the
course of the first ten games (Figure~\ref{fig:prob_win}).

From the results in this article, it seems as if players not only improve, but also 
learn to play in a way that approaches optimality in several
respects, which we list here. Rates of winning come within about five
percentage points of the maximum win rates possible, and this average is taken
without cleaning the data of players who were obviously not trying. In looking
at box-by-box decision making, player's probabilities of stopping came to
approximate an optimal step function after a handful of games
(Figure~\ref{fig:prob_stop_reg_and_sim}). And similar deviations from the
optimal pattern were also observed in a very idealized agent that learns from
data, suggesting that some initial deviation from optimality is
inevitable. 
Perhaps even more remarkably, they were able
to do this with no prior knowledge of the distribution and, consequentially,
sometimes unhelpful feedback (Figure~\ref{fig:unhelpful_feedback}).

In the first game, player behavior was relatively well fit by the Value
Oblivious model which had a fixed probability of stopping at each box,
independent of the values of the boxes. In later plays,
threshold-based decision making---the optimal strategy for known
distributions---fit the data best (Figure~\ref{fig:bayes-factors}).
Further analyses uncovered that players' implicit thresholds were close to the
optimal critical values (Figure~\ref{fig:thresholds7}), which is surprising given the small
likelihood that players actually would, or could, solve for these values. 

A few points of difference could explain the apparent departure from prior
empirical results. First, to our knowledge, ours is the first study to begin
with an unknown distribution that players can learn over time. Seemingly small
differences in instructions to participants could have a large effect.  As
mentioned, other studies have informed participants about the distribution, for
example its minimum, maximum, and shape. Second, some prior experimental
designs have presented ranks or manipulated values that made it difficult to
impossible for participants to learn the distributions. Third, past studies
have used relatively few participants, making it difficult to detect learning
effects. For example, \citet{campbell-secretary} have 12 to 14 participants per
condition and assess learning by binning the first 40, second 40, and third 40
games played. In contrast, with over 5,000 participants, we can examine success
rates at every number of games played beneath 20 with large sample sizes.
This turns out to be important for testing learning, as most of it happens in
the first 10 games. While our setting is different than prior ones, the change
of focus seems merited because many real-world search problems (such as hiring
employees in a city) involve repeated searches from learnable distributions.

A promising direction for future research would be to propose and test a
unified model of search behavior that can capture several properties observed
here such as: 
the effects on unhelpful feedback (Figure~\ref{fig:unhelpful_feedback}), 
the transition from value-oblivious to threshold-based decision making (Figure~\ref{fig:bayes-factors}), 
and the learning of near-optimal
thresholds (Figure~\ref{fig:thresholds7}). Having established that people learn
to approximate optimal stopping in repeated searches through
distributions of candidates, the next challenge is to model how individual 
strategies evolve with experience.

\appendix
\section{Computation of critical values and probability of winning in known distribution case}
\label{sec:comp_cv_pwin}

Note that
\begin{equation} \label{GrindEQ__1_} 
p_{1} (h)=1-F(h). 
\end{equation} 
This is the probability of observing something on the last round that exceeds the best observation.  Generally

\begin{equation}
\begin{aligned}
 p_{t}(h) &=\left\{\begin{array}{cc} {\int _{h}^{\infty }F(x)^{t-1} f(x)dx +F(h)p_{t-1} (h)} & {c_{t} \le h} \\ {\int _{c_{t} }^{\infty }F(x)^{t-1} f(x)dx +\int _{h}^{c_{t} }p_{t-1} (x)f(x)dx +F(h)p_{t-1} (h)} & {c_{t} >h} \end{array}\right.  \\
&=\left\{\begin{array}{cc} {\frac{1-F(h)^{t} }{t} +F(h)p_{t-1}
  (h)} & {c_{t} \le h} \\ {\frac{1-F(c_{t} )^{t} }{t} +\int _{h}^{c_{t}
  }p_{t-1} (x)f(x)dx +F(h)p_{t-1} (h)} & {c_{t} >h} \end{array}\right. 
\end{aligned}
\label{GrindEQ__2_} 
\end{equation}

To understand \eqref{GrindEQ__2_}, first note that if the critical value is
less than the historically best observation, anything exceeding the
historically best observation $h$ is acceptable.  Thus, if something better,
$x$, is observed, it is accepted, in which case the player wins if all the
subsequent observations are worse, with probability $F(x)^{t-1}$.  Otherwise,
we inherit $h$ and have a probability of winning $p_{t-1}(h)$ in the next
period.  

If $c_{t} > h$, then an acceptance occurs only if the realization $x$ exceeds
$c_{t}$, in which case the probability of winning remains $F(x)^{t-1}$.  If the
player experiences a value between $h$ and $c_{t}$, the historical maximum
rises but is not accepted.  Finally, if the observation is less than $h$, the
historically best observation is not incremented and the player moves to period
$t - 1$.

Let $\bar{p}_{t} $ be the value of $p_{t}$ arising when $c_{t} = h_{t}$.  Then

\begin{lemma}
$\bar{p}_{t} (h)=\sum _{j=1}^{t}\frac{F(h)^{j-1} -F(h)^{t} }{t+1-j}  $
\end{lemma}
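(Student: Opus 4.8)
The plan is to prove the closed form by induction on $t$, driving the induction with the recurrence that Equation~\eqref{GrindEQ__2_} supplies in the regime $c_t = h$. The first step is to observe that $\bar p_t$ is by definition the value of $p_t$ when $c_t = h_t$, so we are sitting exactly on the boundary of the piecewise definition and fall into the top branch ($c_t \le h$). That branch, after the simplification already recorded in \eqref{GrindEQ__2_}, reads
\[
\bar p_t(h) = \frac{1 - F(h)^t}{t} + F(h)\,\bar p_{t-1}(h),
\]
with the base case $\bar p_1(h) = p_1(h) = 1 - F(h)$ taken directly from \eqref{GrindEQ__1_}. Writing $F := F(h)$ for brevity, I would first verify that the claimed formula reproduces the base case: for $t=1$ the single summand is $\tfrac{F^{0} - F^{1}}{1} = 1 - F$, matching \eqref{GrindEQ__1_}.

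For the inductive step I would assume the formula holds at $t-1$, namely $\bar p_{t-1} = \sum_{j=1}^{t-1} \tfrac{F^{j-1} - F^{t-1}}{t-j}$, substitute it into the recurrence, and distribute the factor $F$ through the sum. This turns the summand $\tfrac{F^{j-1} - F^{t-1}}{t-j}$ into $\tfrac{F^{j} - F^{t}}{t-j}$, so that
\[
\bar p_t = \frac{1 - F^{t}}{t} + \sum_{j=1}^{t-1} \frac{F^{j} - F^{t}}{t-j}.
\]
The key bookkeeping move is a reindexing of the sum by $j' = j+1$, which sends the range $j = 1,\dots,t-1$ to $j' = 2,\dots,t$ and the denominator $t-j$ to $t+1-j'$, yielding $\sum_{j'=2}^{t} \tfrac{F^{j'-1} - F^{t}}{t+1-j'}$.

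The final step is to recognize the standalone term $\tfrac{1 - F^{t}}{t}$ as precisely the missing $j'=1$ summand of the target expression, since $\tfrac{F^{0} - F^{t}}{t+1-1} = \tfrac{1 - F^{t}}{t}$; absorbing it extends the sum to $\sum_{j'=1}^{t} \tfrac{F^{j'-1} - F^{t}}{t+1-j'}$, which is exactly the claimed formula at $t$. I expect no genuine obstacle here—the argument is a routine induction—so the only place to be careful is the twofold matching of indices: confirming that multiplying by $F$ shifts the numerator exponents correctly and that the leftover geometric-average term slots in as the $j=1$ boundary term after the shift. The single substantive modeling point, worth stating explicitly rather than computing, is the justification that setting $c_t = h$ lands us in the $c_t \le h$ branch of \eqref{GrindEQ__2_}; everything downstream is algebra.
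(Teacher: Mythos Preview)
Your proposal is correct and follows essentially the same argument as the paper: induction on $t$, using the top branch of \eqref{GrindEQ__2_} at $c_t=h$ to obtain the recurrence $\bar p_t(h)=\tfrac{1-F(h)^t}{t}+F(h)\,\bar p_{t-1}(h)$, distributing the factor $F(h)$, reindexing $j\mapsto j+1$, and absorbing the leading term as the $j=1$ summand. The only cosmetic difference is that the paper writes $p_{t-1}(h)$ rather than $\bar p_{t-1}(h)$ before invoking the inductive hypothesis, implicitly using the monotonicity $c_{t-1}\le c_t=h$ to land in the top branch; your version makes this explicit.
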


\begin{proof}
The proof is by induction on $t$. The lemma is trivially satisfied at $t = 1$.  Suppose it is satisfied at $t - 1$.  Then, from \eqref{GrindEQ__2_},
\begin{flalign*}
\bar{p}_{t} (h) &=\frac{1-F(h)^{t} }{t} +F(h)p_{t-1} (h)=\frac{1-F(h)^{t} }{t} +F(h)\sum _{j=1}^{t-1}\frac{F(h)^{j-1} -F(h)^{t-1} }{t-j}  \\ 
&=\frac{1-F(h)^{t} }{t} +\sum _{j=1}^{t-1}\frac{F(h)^{j} -F(h)^{t} }{t-j}  \\ 
&=\frac{1-F(h)^{t} }{t} +\sum _{j=2}^{t}\frac{F(h)^{j-1} -F(h)^{t} }{t+1-j}  =\sum _{j=1}^{t}\frac{F(h)^{j-1} -F(h)^{t} }{t+1-j}   
\end{flalign*}
\end{proof}

\noindent Note that, at the value $x_{t} = c_{t}$, the searcher must be indifferent between accepting $x_{t}$ and rejecting, in which case the history becomes $c_{t}$.  Therefore,

\noindent 
\[\left. \frac{\partial p_{t} }{\partial c_{t} } \right|_{h_{t} =c_{t} } =0.\]

\noindent This gives
\[0=-f(c_{t} )F(c_{t} )^{t-1} +f(c_{t} )\bar{p}_{t-1} (c_{t} ),\]
or,
\begin{equation}
F(c_{t} )^{t-1} =\sum _{j=1}^{t-1}\frac{F(c_{t} )^{j-1} -F(c_{t} )^{t-1}
  }{t-j}.
\label{GrindEQ__3_} 
\end{equation}

\noindent Equation \ref{GrindEQ__3_} is intuitive, in that it says $F(c_{t} )^{t-1} =\bar{p}_{t-1} (c_{t} )$, that is, the probability of winning given an acceptance of $c_{t}$, which is $F(c_{t} )^{t-1} $, equals the probability of winning given that $c_{t}$ is rejected and becomes the going-forward history, which would give a probability of winning of $\bar{p}_{t-1} (c_{t} )$.  Thus,

\[p_{t} (h)=\left\{\begin{array}{cc} {\sum _{j=1}^{t}\frac{F(h)^{j-1} -F(h)^{t} }{t+1-j}  } & {c_{t} \le h} \\ {\frac{1-F(c_{t} )^{t} }{t} +\int _{h}^{c_{t} }p_{t-1} (x)f(x)dx +F(h)p_{t-1} (h)} & {c_{t} >h} \end{array}\right. \] 

\noindent Letting $i = T - t,$ and $z_{i} = F(c_{t})$ we can rewrite \eqref{GrindEQ__3_} to give

\noindent 
\begin{equation}
z_{t} ^{t-1} =\sum _{j=1}^{t-1}\frac{z_{t} ^{j-1} -z_{t} ^{t-1} }{t-j}
\textrm{,
or } z_{t} ^{t} =\sum _{j=1}^{t-1}\frac{z_{t} ^{j} -z_{t} ^{t} }{t-j} 
\label{GrindEQ__4_} 
\end{equation}

\section{Computation of probability of winning in unknown distribution case}
\label{sec:comp_pwin_unknown}
The probability of winning for a fixed value of $k$ and $T$ periods is

\noindent Thus,
\begin{align}
1&=\sum _{j=1}^{t-1}\frac{z_{t}^{j-t} -1}{t-j}  =\sum _{i=1}^{t-1}\frac{z_{t}^{-i} -1}{i} \label{GrindEQ__5_} \\
p_{t} (h)&=\left\{\begin{array}{cc} {\sum _{j=1}^{t}\frac{h^{j-1} -h^{t} }{t+1-j}  } & {z_{t} \le h} \\ {\frac{1-z_{t} ^{t} }{t} +\int _{h}^{z_{t} }p_{t-1} (x)dx +hp_{t-1} (h)} & {z_{t} >h} \end{array}\right. 
\end{align}

\[\int _{0}^{\infty }kF(x)^{k-1} f(x)\sum _{m=k+1}^{T}F(x)^{m-k-1}   \int _{x}^{\infty }f(y)F(y)^{T-m} dy dx\]
\[=\sum _{m=k+1}^{T}\int _{0}^{1}kx^{m-2}  \int _{x}^{1}y^{T-m} dy dx =\sum
_{m=k+1}^{T}\int _{0}^{1}kx^{m-2} \frac{1-x^{T-m+1} }{T-m+1}  dx \]
\[=\sum
    _{m=k+1}^{T}{\tfrac{k}{T-m+1}} \int _{0}^{1}x^{m-2} -x^{T-1}  dx
    =\sum
    _{m=k+1}^{T}{\tfrac{k}{T-m+1}} \left({\tfrac{1}{m-1}} -{\tfrac{1}{T}}
\right)  ={\tfrac{k}{T}} \sum _{m=k+1}^{T}{\tfrac{1}{m-1}}  \]

\section{Appendix Figures}

\begin{figure}
\centering
\includegraphics[width=0.95\textwidth]{./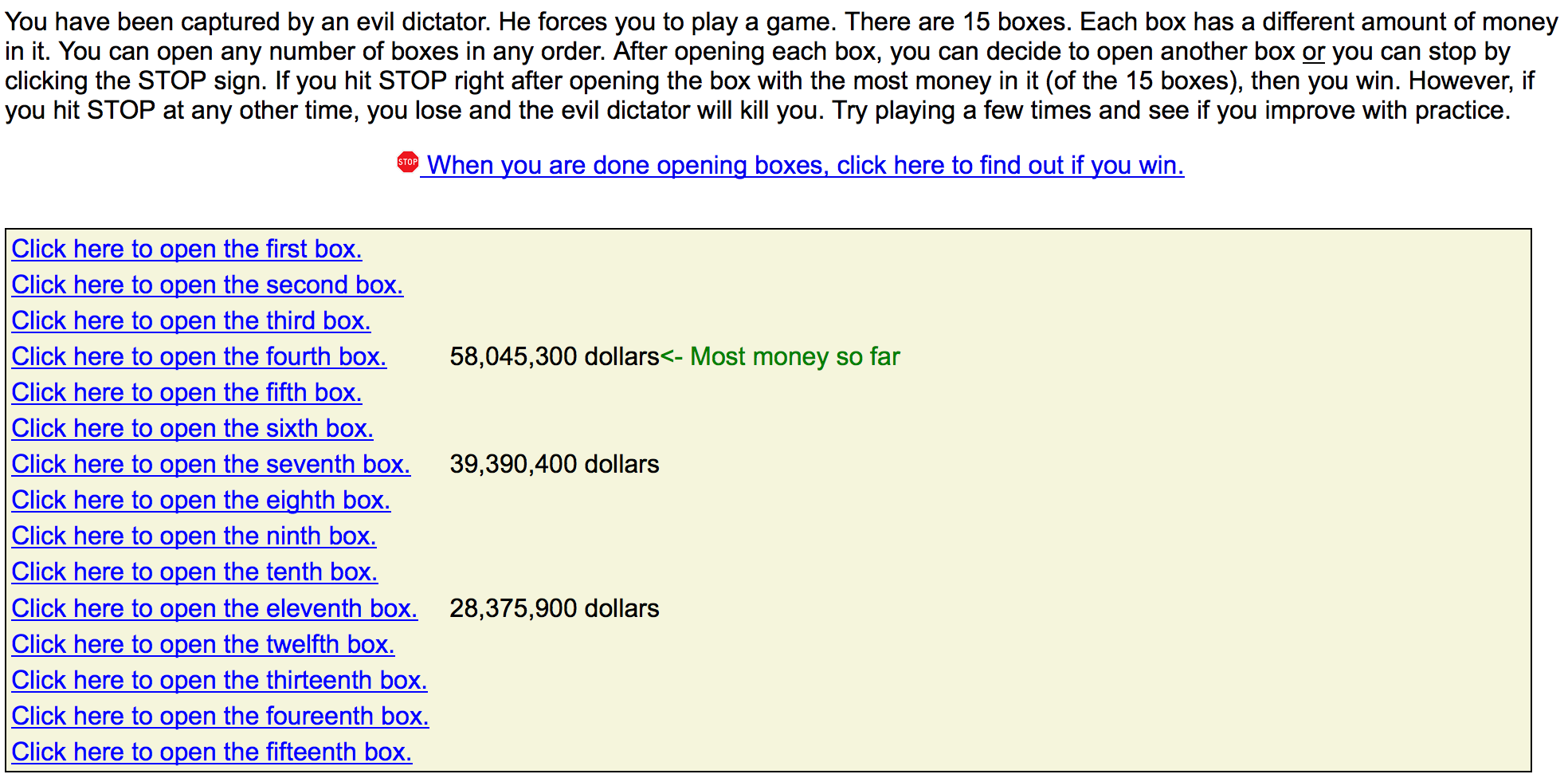}
\caption{Screenshot of the a 15 box treatment with 3 boxes opened.}
\label{fig:screenshot}
\end{figure}

\begin{figure}
\centering
\includegraphics[width=0.75\textwidth]{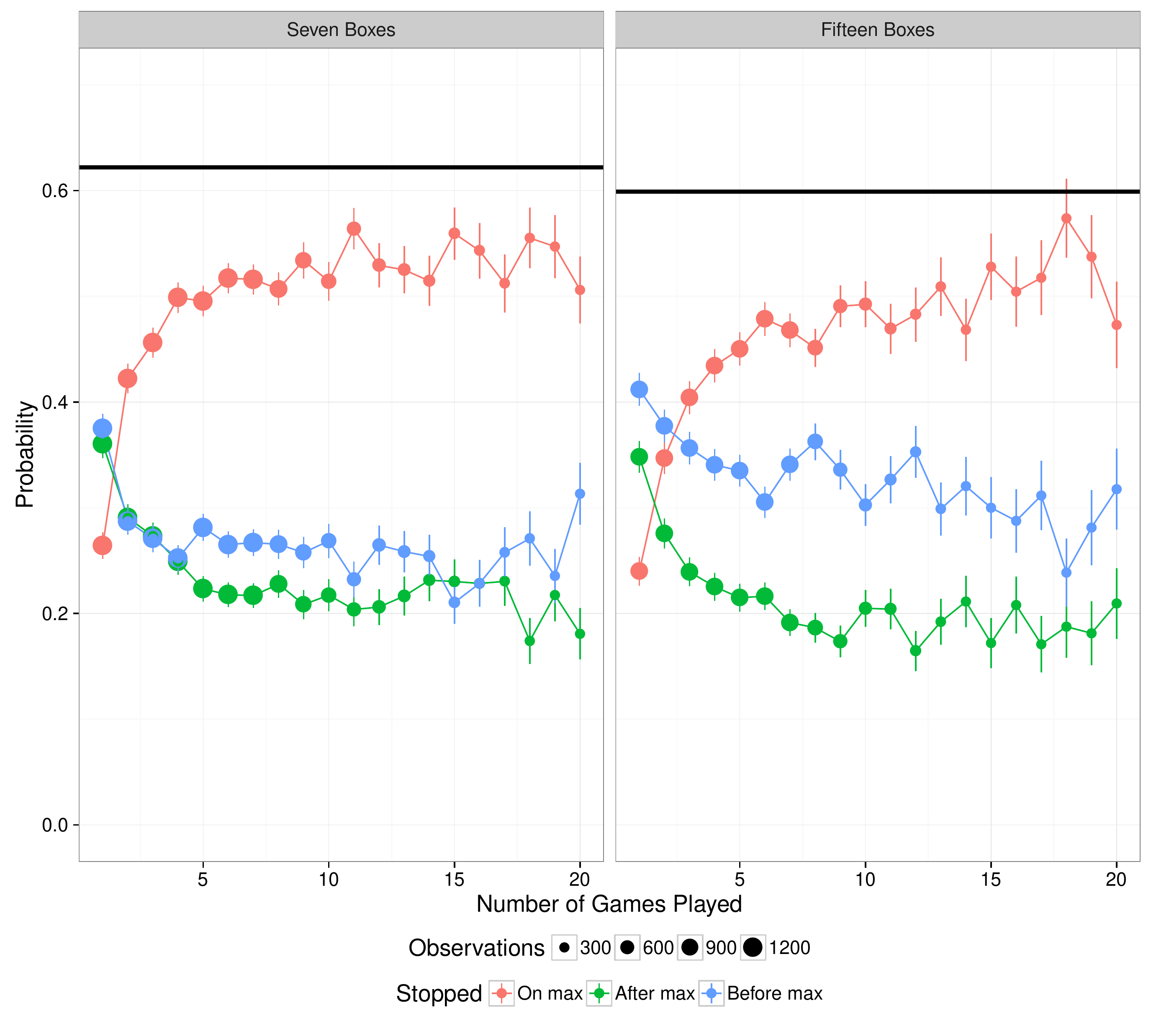}
\caption{Rates of winning the game for human players where each player
  played at least 7 games. Error
  bars indicated $\pm 1$ standard error, when they are not visible they are
smaller than the points.  The area of each point is proportional to the number of players in the average.}
\label{fig:prob_win7}
\end{figure}

\begin{figure}
\centering
\includegraphics[width=0.48\textwidth]{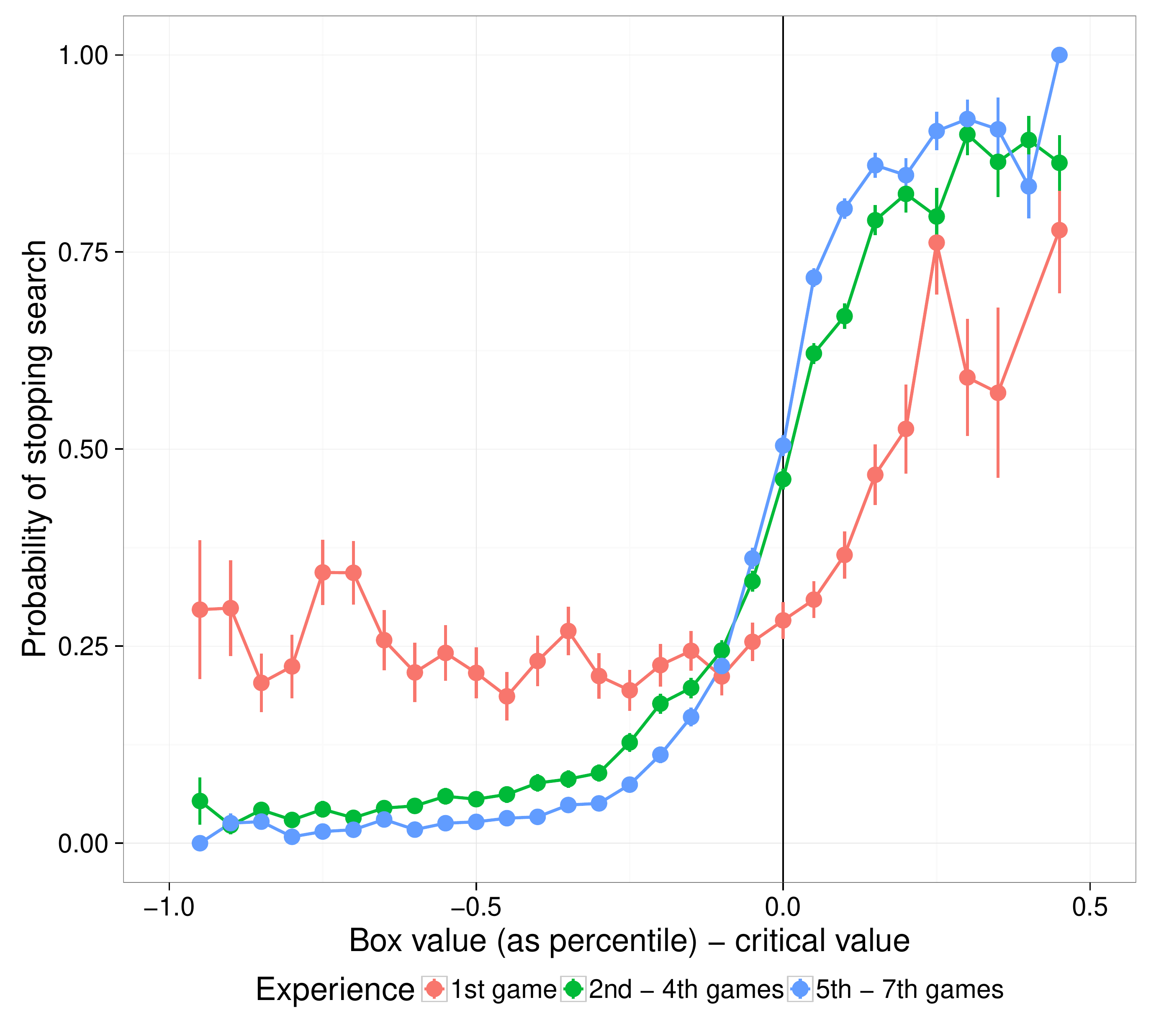}
\caption{ In Figure \ref{fig:prob_stop} different players contribute to
  different curves. For example, a player who only played one time would
  only contribute to the red curve, while someone who played 10 times would
  contribute to all three curves. To address these selection effects, in
  this plot, we restrict to the first 7 games of those who played at least 7 games.}
\label{fig:prob_stop_select}
\end{figure}





\clearpage
\bibliographystyle{apalike}
\bibliography{sid}


\end{document}